\long\def\ca#1\cb{} 
\newcommand{\abs}[2][]{#1| #2 #1|}
\newcommand{\bramatket}[3]{\langle #1 \hspace{1pt} | #2 | \hspace{1pt} #3 \rangle}
\newcommand{\ket}[1]{|#1\rangle}               
\newcommand{\bra}[1]{\langle #1|}              
\newcommand{\dya}[1]{\ket{#1}\!\bra{#1}}
\newcommand{\rank}{\text{rank}}
\newcommand{\Tr}{{\rm Tr}}
\newcommand{\diag}{\rm diag}
\newcommand{\supp}{\text{supp}}
\newcommand{\ave}[1]{\langle #1\rangle}               
\renewcommand{\geq}{\geqslant}
\renewcommand{\leq}{\leqslant}
\newcommand{\ad}{^\dagger}
\newcommand*{\id}{\openone}
\newtheorem{theorem}{Theorem}
\newtheorem{corollary}{Corollary}
\newtheorem*{theorem*}{Theorem}
\newtheorem*{corollary*}{Corollary}
\begin{document}
\title{{Jarzynski-like Equality of Nonequilibrium Information Production Based on Quantum Cross Entropy}}

\author{Akira Sone}
\email{akira.sone@umb.edu}
\affiliation{Department of Physics, University of Massachusetts, Boston, MA 02125, USA}
\affiliation{Aliro Technologies, Inc. Boston, Massachusetts 02135, USA}

\author{Naoki Yamamoto}
\affiliation{Quantum Computing Center, Keio University, 3-14-1 Hiyoshi,
Kohoku-ku, Yokohama, Kanagawa 223-8522, Japan}
\affiliation{
Department of Applied Physics and Physico-Informatics,
Keio University, Hiyoshi 3-14-1, Kohoku-ku, Yokohama 223-8522, Japan}

\author{Tharon Holdsworth}
\affiliation{Department of Physics, University of Massachusetts, Boston, MA 02125, USA}
\affiliation{Department of Physics, University of Alabama at Birmingham, Birmingham, Alabama 35294, USA}

\author{Prineha Narang}
\affiliation{College of Letters and Science, UCLA, Los Angeles, CA 90095 USA}
\affiliation{School of Engineering and Applied Sciences, Harvard University, Cambridge, Massachusetts 02138, USA.}

\begin{abstract}
{The two-time measurement scheme is well studied in the context of quantum fluctuation theorem. However, it becomes infeasible when the random variable determined by a single measurement trajectory is associated with the von-Neumann entropy of the quantum states. We employ the one-time measurement scheme to derive a Jarzynski-like equality of nonequilibrium information production by proposing an information production distribution based on the quantum cross entropy. The derived equality further enables one to explore the roles of the quantum cross entropy in quantum communications, quantum machine learning and quantum thermodynamics.}
\end{abstract}

\maketitle

\section{Introduction}
\label{sec:Intro}

Quantum thermodynamics {explores} the laws of the thermodynamics in the nanoscale from the perspective of the quantum information science~\cite{DeffnerBook19,Binder19,Anders16,sagawa2012thermodynamics,Goold16,ng2018resource,Auffeves2022energy,deffner2021energetic,aifer2022quantum,buffoni2022third}. 
 {On such scales}, statistical fluctuations become more significant, {and have principally been accounted for by} {fluctuation theorem~\cite{evans1993probability,Jarzynski97,crooks1999entropy,hatano2001steady, de2022quantum}}. {The discovery of the fluctuation theorem is one of the most important accomplishments in the thermodynamics to date}~\cite{Ortiz2011}.  {The} fluctuation theorem can be regarded as a first principle in thermodynamics, from which many fundamental {principles of} thermodynamic phenomena can be derived, such as arrow of time~\cite{jarzynski2013equalities} and response theory~\cite{andrieux2008quantum,andrieux2009fluctuation}.

{More recently, fluctuation theorem have equally been used to characterize information processing tasks.} For example, Sagawa and Ueda~\cite{Sagawa10}, and Fujitani and Suzuki~\cite{fujitani2010jarzynski}, related the fluctuation theorem with an efficacy of the feedback control for the manipulation of the total {entropy production} via measurements. The relation between the fluctuation theorem and the adiabaticity of the process was revealed by considering the state distinguishability~{\cite{Deffner16,Sone21b}}.  {In the context of} quantum computing and communications, Gardas and Deffner~\cite{gardas2018quantum} demonstrated that the  fluctuation theorem can be used to determine the dynamics of the quantum systems and the susceptibility to the thermal noise. Also, Kafri and Deffner~\cite{Kafri12} related the fluctuation theorem and the Holevo information~\cite{holevo1973bounds,holevo1998capacity,WatrousBook18}, {which upper bounds the amount of classical information that can be transmitted through the quantum channel.}

A standard approach to the fluctuation theorem in the quantum regime is the two-time measurement (TTM) scheme~\cite{Tasaki00,Kurchan01,Smith2018,An15,Campisi11,aguilar2021two,hernandez2020experimental,hernandez2021experimental,albash2013fluctuation,rastegin2014jarzynski}, in which the distribution of the measurement outcomes is constructed by the projection measurements on the system before and after the process. {The first measurement corresponds to the state preparation of the input state, which is an ensemble of the eigenstates of the first measurement weighted by the probabilities of obtaining the corresponding outcomes, while the second measurement can be independent from the output state~\cite{rastegin2013non,Kafri12}.}

While this scheme {corresponds} to the classical approach in stochastic thermodynamics~\cite{jarzynski2015quantum}, in the quantum regime, it is considered to be inconsistent because it does not taken into account the quantum coherence~\cite{perarnau2017no} and the informational contribution of back-action of projection  measurements~\cite{Deffner16}. {Particularly, for the information production (namely the von-Neumann entropy gain), one needs to fully obtain the information of the output state because the second measurement is strictly dependent on the principal components of the output state, which requires the quantum state tomography. Therefore, from the practical and conceptual perspective, the TTM scheme is infeasible when we want to deal with information production in the context of fluctuation theorem. Also, while there are other approaches beyond the TTM scheme, such as the Bayesian method~\cite{buscemi2021fluctuation,micadei2020quantum} and quasiprobabilty~\cite{levy2020quasiprobability,lostaglio2018quantum}, in order to deal with information production, they all strictly require the quantum state tomography; therefore, we need to find an alternative approach to deal with the information production.}

{To solve this problem, we employ so-called one-time measurement (OTM) scheme, which was proposed by Deffner, Paz and Zurek in Ref.~\cite{Deffner16}. In this scheme, similar to the TTM scheme, we perform a projection measurement initially, which corresponds to the state preparation of the input state. However, what differs from the TTM scheme is that the second projection measurement is avoided, so that the corresponding distribution of the measurement outcomes is determined by the conditional expectation of the observable of interest given the initial measurement outcome. This quantity can be estimated if the post-measurement state of the initial measurement and the dynamics are known. Particularly, for the information production, we do not have to diagonalize the output state in the OTM scheme, so that the OTM scheme is the \textit{only} option.}

{This paper is organized as the following. In Sec.~\ref{sec:MainResults}, we first propose an information production distribution for an input state of rank $r$ and a quantum channel. Then, we derive the Jarzynski-like equality and the lower bound on the total information production, which particularly becomes significant when we nee to consider the information flow of the system in the quantum processes~\cite{buscemi2016approximate,das2018fundamental}. We demonstrate that the lower bound is characterized by the quantum cross entropy. While there were less attentions on the quantum cross entropy, recently, the relations of the quantum cross entropy with the maximum likelihood principle in the machine learning~\cite{zhou_and_wang_2021quantum} and the quantum source coding~\cite{zhou2021quantum} have been explored. In our paper, we further explore the roles of quantum cross entropy in various protocols. In Sec.~\ref{sec:Example}, we discuss the applications of our result to quantum communications, quantum machine learning and quantum thermodynamics by focusing on the quantum autoencoder (QAE) protocol~\cite{Romero17,wan2017quantum}, which is a quantum data compression protocol assisted by the variational quantum algorithms (VQAs)~\cite{Preskill18, mcclean2016theory,jones2019variational,nakanishi2020sequential,cerezo2021variational,bharti2022noisy}, and the maximum available work theorem~\cite{deffner2017kibble} in the quantum thermodynamic systems, followed by the conclusion in Sec.~\ref{sec:conc}.}

\section{Main Results}
\label{sec:MainResults}

{Let us consider a Hilbert space $\mathcal{H}$ of dimension $d\equiv\dim(\mathcal{H})$. Let $\mathcal{B}(\mathcal{H})$ denote the set of the density matrices acting on $\mathcal{H}$. We initially prepare a quantum state $\rho_0\in\mathcal{B}(\mathcal{H})$, and perform a measurement with an observable $P\equiv \sum_{i=1}^{d}a(p_i)\Pi_{i}$, where $\Pi_i\equiv\dya{p_i}$ are the projectors on the eigenbases of $P$. Suppose that the outcome is $a(p_i)$. Then, the post-measurement state is given by $\dya{p_i}=\Pi_i\rho_0 \Pi_i/p_i$ with $p_i\equiv \Tr\left[\rho_0\Pi_i\right]$. Then, in general, the input state is given by~\cite{Kafri12,rastegin2013non}}
\begin{align}
    {\rho_{\text{in}}=\sum_{i=1}^{r}p_i\dya{p_i}\,,}
\label{eq:input}
\end{align}
{where $r\equiv\rank(\rho_{\text{in}})$ denotes the rank of the input state. The state $\rho_{\text{in}}$ is an ensemble of the eigenbases of the initial measurement $P$ weighted by the probabilities of obtaining the outcomes $a(p_i)$; therefore, we can regard the initial measurement as a protocol of the state preparation of $\rho_{\text{in}}$. In this case, $p_i$ satisfies the following conditions $0<p_i\leq 1~(1\leq i\leq r)$, $p_i=0~(r+1\leq i\leq d)$, and   $\sum_{i=1}^{r}p_i=1$.}

{Let $\Phi:\mathcal{B}(\mathcal{H})\to\mathcal{B}(\mathcal{H}')$ be a quantum channel, which is a completely positive and trace-preserving (CPTP) map~\cite{wilde2013quantum}. Through this channel, the output state $\rho_{\text{out}}\in\mathcal{B}(\mathcal{H}')$ is given by
\begin{align}
\rho_{\text{out}}\equiv \Phi(\rho_{\text{in}})\,.
\label{eq:Output}
\end{align}
The total information production is defined as
\begin{align}
    \Delta S \equiv S(\rho_{\text{out}})-S(\rho_{\text{in}})\,,
\label{eq:InformationProduction}
\end{align}
where $S(\rho)\equiv-\Tr\left[\rho\ln\rho\right]$ denotes the von-Neumann entropy of the quantum state $\rho$.}

Here, we propose the following   {information production} distribution in the OTM scheme~\footnote{{The OTM scheme has been utilized to explore work and heat in the open quantum system~\cite{Sone20a}, its classical correspondence~\cite{Sone21b}, heat exchange~\cite{sone2022heat}, and work as an external observable~\cite{Beyer2020}. Particularly, a second-law-like inequality involving the guessed heat introduced in Ref.~\cite{Sone20a} can be derived by using $\widetilde{P}(\sigma)$ (See Appendix.~\ref{app:2ndLawGuessedHeat}).}}
\begin{align}
    \widetilde{P}(\sigma)\equiv\sum_{i=1}^{r}p_i\delta(\sigma-C(\Phi(\dya{p_i}),\rho_{\text{out}})-\ln p_i)\,,
\label{eq:DisOTM}
\end{align}
where $C(\rho_1,\rho_2)\equiv-\Tr\left[\rho_1\ln\rho_2\right]$ denotes the quantum cross entropy of $\rho_{1}$ with respect to $\rho_2$. {Let $\supp(\rho)$ denote the support of a quantum state $\rho$.}  {Then, note that}  $C(\rho_1,\rho_2)<\infty$ ($\supp(\rho_1)\subseteq\supp(\rho_2)$) and $C(\rho_1,\rho_2)=\infty$ (otherwise). Also, by definition, we have $C(\rho,\rho)=S(\rho)$. In Eq.~\eqref{eq:DisOTM}, {due to $\rho_{\text{out}}=\Phi(\rho_{\text{in}})=\sum_{i=1}^{r}p_i\Phi(\dya{p_i})$}, we have $\supp(\Phi(\dya{p_i}))\subseteq\supp(\rho_{\text{out}})$, so that $C(\Phi(\dya{p_i}),\rho_{\text{out}})<\infty$.  

{With this distribution, the average of $\sigma$ with respect to $\widetilde{P}(\sigma)$ becomes the exact information production}
\begin{align}
\ave{\sigma}_{\widetilde{P}} = S(\rho_{\text{out}})-S(\rho_{\text{in}})=\Delta S\,,
\label{eq:EntropyProdOTM}
\end{align}
{where we used the linearity on the first argument of the quantum cross entropy~\cite{zhou_and_wang_2021quantum} and Eq.~\eqref{eq:Output}. Let $r'\equiv\rank(\rho_{\text{out}})$ be the rank of the output state. Then, we can interpret the random variable $\sigma$ as follows. Let $\{q_j, \ket{q_j}\}_{j=1}^{r'}$ denote an eigensystem of $\rho_{\text{out}}$. Let us define the transition probability $P(j|i)\equiv\bramatket{q_j}{\Phi(\dya{p_i})}{q_j}$. Then, in the OTM scheme, $\sigma$ randomly takes $\sum_{j}(-\ln  q_j)P(j|i)+\ln p_i=\sum_{j}(-\ln q_j+\ln p_i)P(j|i)$, which is the conditional expectation of the information production given the initial measurement outcome. }

Here, $\sigma$ can be also identified to be a random variable as a source of the {information production} $\Delta S$.
Therefore, $\widetilde{P}(\sigma)$ is a \textit{good} definition. {Averaging the exponentiated information production with respect to the distribution in Eq.~\eqref{eq:DisOTM},} we can obtain our main result:
\begin{theorem}[Jarzynski-like equality of nonequilibrium information production]
\label{th:JarzynskiOTM}
The {Jarzynski-like equality of nonequilibrium information production} is 
\begin{align}
    \ave{e^{-\sigma}}_{\widetilde{P}}= \sum_{i=1}^{r}e^{-C(\Phi(\dya{p_i}),\rho_{\text{out}})}\,,
\label{eq:JarzynskiOTM}
\end{align}
which results in 
\begin{equation}
{\Delta S\geq L_{\text{otm}}\,,}
\label{eq:LowerBound}
\end{equation}
{where $L_{\text{otm}}$ is defined as }
\begin{align}
    {L_{\text{otm}}\equiv -\ln\left(\sum_{i=1}^{r}e^{-C(\Phi(\dya{p_i}),\rho_{\text{out}})}\right)\,. }
    \label{eq:LotmDef}
\end{align}
\end{theorem}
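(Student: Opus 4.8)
The plan is to establish the equality \eqref{eq:JarzynskiOTM} by a direct evaluation of the exponential average against the distribution \eqref{eq:DisOTM}, and then obtain the lower bound \eqref{eq:LowerBound} as an immediate consequence via Jensen's inequality. No deep input is needed beyond the definition of $\widetilde{P}(\sigma)$ and the identity \eqref{eq:EntropyProdOTM} already derived in the excerpt.

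First I would compute $\ave{e^{-\sigma}}_{\widetilde{P}}$ from \eqref{eq:DisOTM}. Integrating the Dirac delta in each term against $e^{-\sigma}$ evaluates $\sigma$ at $C(\Phi(\dya{p_i}),\rho_{\text{out}})+\ln p_i$, so that
\begin{align}
\ave{e^{-\sigma}}_{\widetilde{P}}
=\sum_{i=1}^{r}p_i\, e^{-C(\Phi(\dya{p_i}),\rho_{\text{out}})-\ln p_i}
=\sum_{i=1}^{r}e^{-C(\Phi(\dya{p_i}),\rho_{\text{out}})}\,,
\end{align}
since $p_i\,e^{-\ln p_i}=1$. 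This is well defined term by term because, as noted below \eqref{eq:DisOTM}, $\rho_{\text{out}}=\sum_i p_i\Phi(\dya{p_i})$ forces $\supp(\Phi(\dya{p_i}))\subseteq\supp(\rho_{\text{out}})$, hence each cross entropy is finite.

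Next, for the bound I would use the convexity of $x\mapsto e^{-x}$ and Jensen's inequality applied to the probability distribution $\widetilde{P}$, giving $\ave{e^{-\sigma}}_{\widetilde{P}}\geq e^{-\ave{\sigma}_{\widetilde{P}}}$. Substituting $\ave{\sigma}_{\widetilde{P}}=\Delta S$ from \eqref{eq:EntropyProdOTM} (which in turn rests on the linearity of the cross entropy in its first argument together with \eqref{eq:Output}) yields $\sum_{i=1}^{r}e^{-C(\Phi(\dya{p_i}),\rho_{\text{out}})}\geq e^{-\Delta S}$. Taking logarithms and multiplying by $-1$ gives $\Delta S\geq -\ln\!\left(\sum_{i=1}^{r}e^{-C(\Phi(\dya{p_i}),\rho_{\text{out}})}\right)=L_{\text{otm}}$, which is \eqref{eq:LowerBound}.

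I do not anticipate any real obstacle: once the distribution \eqref{eq:DisOTM} and the identity \eqref{eq:EntropyProdOTM} are in place, both parts are one-line computations. The only point deserving a moment of care is confirming that every exponent in the sum is finite so that the logarithm in $L_{\text{otm}}$ is well defined, which is exactly the support-containment observation already made. It is also worth remarking that equality in \eqref{eq:LowerBound} holds precisely when $\sigma$ is $\widetilde{P}$-almost surely constant, i.e. when $C(\Phi(\dya{p_i}),\rho_{\text{out}})+\ln p_i$ does not depend on $i$.
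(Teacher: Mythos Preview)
Your proposal is correct and follows essentially the same route as the paper: evaluate the exponential average directly from the delta distribution \eqref{eq:DisOTM} using $p_i e^{-\ln p_i}=1$, then apply Jensen's inequality together with \eqref{eq:EntropyProdOTM}. Your added remarks on finiteness via support containment and on the equality case are sound extras not spelled out in the paper's proof.
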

\begin{proof}
From Eqs.~\eqref{eq:input} and \eqref{eq:DisOTM}, we have
\begin{align}
\begin{split}
    \ave{e^{-\sigma}}_{\widetilde{P}} =& \int d\sigma \widetilde{P}(\sigma)e^{-\sigma}\\
    =&\sum_{i=1}^{r}p_ie^{-C(\Phi(\dya{p_i}),\rho_{\text{out}})} e^{-\ln p_i}\\
    =&\sum_{i=1}^{r}e^{-C(\Phi(\dya{p_i}),\rho_{\text{out}})}\,,
\end{split}
\end{align}
which proves Eq.~\eqref{eq:JarzynskiOTM}. From Jensen's inequality $\ave{e^{-\sigma}}_{\widetilde{P}}\geq e^{-\ave{\sigma}_{\widetilde{P}}}$ and Eq.~\eqref{eq:EntropyProdOTM}, we obtain {Eq.~\eqref{eq:LowerBound}} \footnote{{Note that our main claims are the derivation of the general integrated fluctuation theorems, which hold for any states and quantum channels, and its potential of characterizing the quantum protocol with the quantum cross entropy. We leave the comparison between the tight bound derived in Refs.~\cite{das2018fundamental,buscemi2016approximate} and our bound derived in the OTM scheme as an open problem.}}. 
\end{proof}

{When $\Phi:\mathcal{B}(\mathcal{H})\to\mathcal{B}(\mathcal{H})$ is particularly a unital map (i.e., $\Phi(\id)=\id$, where $\id$ denotes the identity matrix acting on $\mathcal{H}$), it well known that we have $\Delta S\geq 0$~\cite{Nielsen}. However, we can obtain a tighter bound as demonstrated in the following corollary}:
\begin{corollary}[Lower bound from OTM scheme under a unital map]
\label{cor:OTMUnitalMapBound}
When {$\Phi:\mathcal{B}(\mathcal{H})\to\mathcal{B}(\mathcal{H})$} is a unital map, $L_{\text{otm}}$ is  {a tighter bound on $\Delta S$ as}
\begin{align}
{\Delta S\geq L_{\text{otm}}\geq 0}\,.
\label{eq:LowerBoundUnitalMap}
\end{align}
\end{corollary}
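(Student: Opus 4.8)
The plan is to build directly on Theorem~\ref{th:JarzynskiOTM}: since that result already gives $\Delta S\geq L_{\text{otm}}$ for an \emph{arbitrary} CPTP channel, the only new content is the inequality $L_{\text{otm}}\geq 0$, which by Eq.~\eqref{eq:LotmDef} is equivalent to $\sum_{i=1}^{r}e^{-C(\Phi(\dya{p_i}),\rho_{\text{out}})}\leq 1$. So I would reduce the corollary to proving this bound under the extra hypothesis $\Phi(\id)=\id$. First I would rewrite the cross entropies in the output eigenbasis: using the eigensystem $\{q_j,\ket{q_j}\}_{j=1}^{r'}$ of $\rho_{\text{out}}$ and the transition probabilities $P(j|i)=\bramatket{q_j}{\Phi(\dya{p_i})}{q_j}$ already introduced above Eq.~\eqref{eq:JarzynskiOTM}, one has $C(\Phi(\dya{p_i}),\rho_{\text{out}})=-\sum_{j=1}^{r'}P(j|i)\ln q_j$, where the sum runs only over $j\leq r'$ because $\supp(\Phi(\dya{p_i}))\subseteq\supp(\rho_{\text{out}})$. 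Trace preservation of $\Phi$ gives $\sum_j P(j|i)=1$, so for each $i$ the numbers $\{P(j|i)\}_j$ form a probability vector over indices with $q_j>0$.

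The key step is the weighted arithmetic--geometric mean inequality (equivalently, concavity of $\ln$): because $\sum_j P(j|i)=1$,
\[
e^{-C(\Phi(\dya{p_i}),\rho_{\text{out}})}=\prod_{j=1}^{r'}q_j^{P(j|i)}\leq\sum_{j=1}^{r'}P(j|i)\,q_j .
\]
Summing over $i$ and exchanging the order of summation, together with linearity of $\Phi$,
\[
\sum_{i=1}^{r}e^{-C(\Phi(\dya{p_i}),\rho_{\text{out}})}\leq\sum_{j=1}^{r'}q_j\sum_{i=1}^{r}P(j|i)=\sum_{j=1}^{r'}q_j\,\bramatket{q_j}{\Phi(\Pi_{\text{in}})}{q_j},
\]
where $\Pi_{\text{in}}\equiv\sum_{i=1}^{r}\dya{p_i}$ is the orthogonal projector onto $\supp(\rho_{\text{in}})$.

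Finally, unitality enters: since $\Pi_{\text{in}}\leq\id$ and $\Phi$ is positive (being CPTP), $\Phi(\Pi_{\text{in}})\leq\Phi(\id)=\id$, hence $\bramatket{q_j}{\Phi(\Pi_{\text{in}})}{q_j}\leq 1$ for every $j$, and the right-hand side above is at most $\sum_{j=1}^{r'}q_j=\Tr[\rho_{\text{out}}]=1$. This proves $L_{\text{otm}}\geq 0$, and combining it with $\Delta S\geq L_{\text{otm}}$ from Theorem~\ref{th:JarzynskiOTM} yields Eq.~\eqref{eq:LowerBoundUnitalMap}. I expect the main (really the only) subtlety to be the support bookkeeping: making sure $\ln q_j$ is evaluated only at strictly positive eigenvalues so that the AM--GM step is literally valid and $\{P(j|i)\}_j$ is an honest probability vector; once that is in place, the operator inequality $\Phi(\Pi_{\text{in}})\leq\Phi(\id)=\id$ is a one-liner and is precisely where the unitality assumption is consumed.
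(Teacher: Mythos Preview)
Your proof is correct and follows essentially the same route as the paper: the paper invokes the inequality $C(\rho_1,\rho_2)\geq -\ln\Tr[\rho_1\rho_2]$ (which is precisely your AM--GM/Jensen step written in basis-free form) to get $\sum_{i}e^{-C(\Phi(\dya{p_i}),\rho_{\text{out}})}\leq \Tr[\Phi(\Pi_{\text{in}})\rho_{\text{out}}]$, and then uses unitality via $\Phi(\Pi_{\text{in}})+\Phi(\overline{\Pi}_{\text{in}})=\id$ to bound this by $1$. The only cosmetic difference is that you unpack the cross-entropy bound explicitly in the output eigenbasis rather than citing it.
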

\begin{proof}

{Given an input state $\rho_{\text{in}}=\sum_{i=1}^{r}p_i\dya{p_i}$ of rank $r$, let  $\Pi_{\text{in}}\equiv \sum_{i=1}^{r}\dya{p_i}~(\overline{\Pi}_{\text{in}}\equiv \id-\Pi_{\text{in}})$ be the projectors onto the support (null space) of $\rho_{\text{in}}$. Let $\Phi:\mathcal{B}(\mathcal{H})\to\mathcal{B}(\mathcal{H})$ be a unital map, i.e. $\Phi(\id)=\id$.} {Because the quantum cross entropy can be lower bounded by using the state overlap~\cite{zhou_and_wang_2021quantum}, we have}
\begin{align}
{C(\Phi(\dya{p_i}),\rho_{\text{out}})\geq -\ln\Tr\left[\Phi(\dya{p_i})\rho_{\text{out}}\right]\,.}   
\end{align}
{Then, due to the linearity of the CPTP map, we can obtain}
\begin{align}
    {\sum_{i=1}^{r}e^{-C(\Phi(\dya{p_i}),\rho_{\text{out}})}\leq \Tr\left[\Phi(\Pi_{\text{in}})\rho_{\text{out}}\right]\,.}
\label{eq:LotmUpperBound}
\end{align}
{Because $\Pi_{\text{in}}+\overline{\Pi}_{\text{in}}=\id$, from}
\begin{align}
    {\Phi(\id)=\Phi(\Pi_{\text{in}})+\Phi(\overline{\Pi}_{\text{in}})=\id\,,}
\label{eq:OverlapUpperbound}
\end{align}
{we can obtain}
\begin{align}
    {\Tr\left[\Phi(\Pi_{\text{in}})\rho_{\text{out}}\right] = 1-\Tr\left[\Phi(\overline{\Pi}_{\text{in}})\rho_{\text{out}}\right]\leq 1\,.}
\end{align}
{Therefore, from Eqs.~\eqref{eq:LotmDef}, \eqref{eq:LotmUpperBound} and \eqref{eq:OverlapUpperbound}, we obtain Eq.~\eqref{eq:LowerBoundUnitalMap}, which proves Corollary.~\ref{cor:OTMUnitalMapBound}.}
\end{proof}

\section{Examples}
\label{sec:Example}

In this section, we illustrate two applications of our {result}: quantum autoencoder and quantum thermodynamics.

\subsection{Quantum Autoencoder}
As our first example, we demonstrate the application of {our result in} the quantum autoencoder (QAE) proposed by Romero, Olson and Aspuru-Guzik in Ref.~\cite{Romero17}. The QAE is a quantum analogue of the (classical) variational autoencoder~\cite{Kingma_Book_Autoencoder}. In the QAE, the encoding and decoding operations are described by a parameterized quantum circuit. The original quantum data is compressed to the latent system by tracing over the other subsystem. Then, one prepares the fresh qubits, and decompressed the quantum data through the decoding operation acting on the fresh-qubit system and the latent system. The goal of the protocol is to recover the quantum data in the output, implying that a low-dimensional feature quantum state is well extracted through the encoding process; thus, we can use the resulting decoding process as a generative model to produce a quantum state outside the training quantum dataset by fluctuating the feature state. 
The cost function dependent on these tunable parameters, which measures the distance between the output and input state, is constructed by the quantum computer, and the set of the parameters is optimized through training the cost function with the classical computers. Recently, as a practical near-term quantum algorithm, the QAE has been widely explored both theoretically and experimentally~\cite{Bravo_Prieto_2021,Bondarenko2019,locher2022quantum,Cao_2021,steinbrecher2019quantum,du2021exploring,pepper2019experimental,mangini2022quantum,patel2022information,ngairangbam2022anomaly,huang2020realization,ma2020compression}.

{Let us describe the setup of the QAE below.} We consider a composite Hilbert space $\mathcal{H}=\mathcal{H}_A\otimes\mathcal{H}_B$, where $\mathcal{H}_A ~(\mathcal{H}_B)$ denotes the Hilbert space of the reduced quantum system $A~(B)$. For the followings, let us regard $\mathcal{H}_A$ as the latent Hilbert space, into which we compress our quantum data. Also, let us write $d_j$ as the dimension of the reduced Hilbert space $\mathcal{H}_j$, $d_j\equiv\dim(\mathcal{H}_j)~(j=A,B)$, so that the dimension of the total system is given by $d=d_Ad_B$. Following Ref.~\cite{Romero17}, we consider the following scenario (see Fig.~\ref{fig:QAE}).
\begin{figure}[htp!]
    \centering
    \includegraphics[width=1.0\columnwidth]{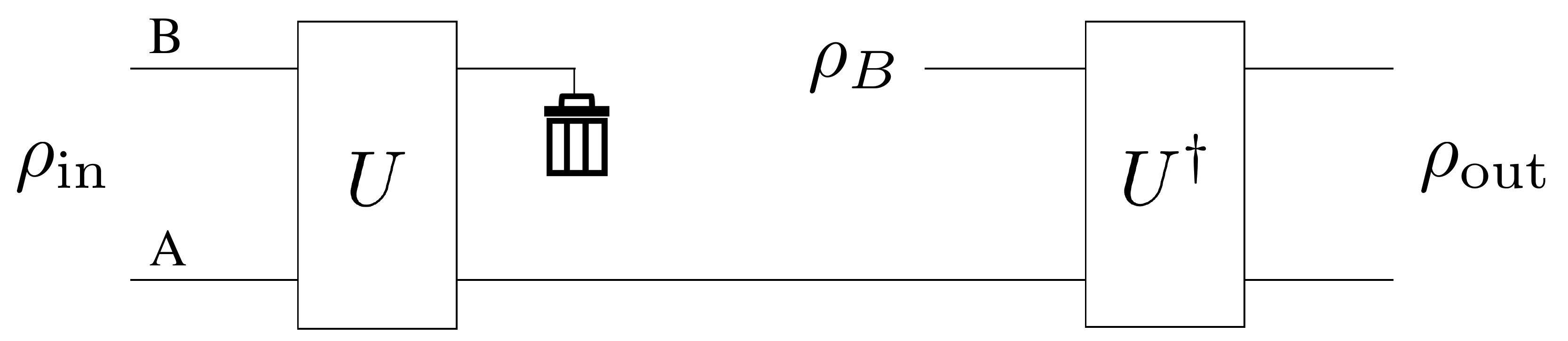}
    \caption{\textbf{Quantum Autoencoder:} We use $U$ to compress the input state $\rho_{\text{in}}$ into the reduced Hilbert space $\mathcal{H}_A$, and use the state $\rho_B$ of the fresh qubits in $\mathcal{H}_B$ to decompress the data by applying the unitary $U\ad$ to generate the output state $\rho_{\text{out}}$. }
    \label{fig:QAE}
\end{figure} 
In this setup, we apply a parameterized unitary $U$ to the input state $\rho_{\text{in}}$ and perform the partial trace over $\mathcal{H}_B$ to compress the quantum data into the latent Hilbert space $\mathcal{H}_A$. Then, we use the fresh qubits prepared in the state {$\rho_B\in\mathcal{B}(\mathcal{H}_B)$} to decompress the data by applying the unitary $U\ad$ to generate the output state $\rho_{\text{out}}$.  In this case, we have 
\begin{align}
\rho_{\text{out}}=\Phi(\rho_{\text{in}})\equiv U\ad\left(\Tr_B\left[U\rho_{\text{in}}U\ad\right]\otimes\rho_B\right)U\,.
\label{eq:QAEevolution}
\end{align}
{To discuss the Jarzynski-like equality, it is convenient to} define the compressed states
\begin{align}
    \rho_A&\equiv \Tr_B\left[U\rho_{\text{in}}U\ad\right]\label{eq:CompressedState1}\\
    \rho_A^{(i)}&\equiv \Tr_B\left[U\dya{p_i}U\ad\right]
    \label{eq:CompressedState2}\,.
\end{align}
Therefore, we can write $\rho_A = \sum_{i=1}^{d}p_i\rho_A^{(i)}$, so that we have $\supp(\rho_A^{(i)})\subseteq \supp(\rho_A)$.

{Given this setup above, we can relate $L_{\text{otm}}$ to the classical information transmission and the cost function in QAE, which demonstrates the roles of the quantum cross entropy in quantum communications and quantum machine learning in the framework of QAE protocol.}

{Let us first derive the expression of $L_{\text{otm}}$ in QAE. From Eq.~\eqref{eq:QAEevolution}, we have 
\begin{align}
C(\Phi(\dya{p_i}),\rho_{\text{out}})=S(\rho_B)+C(\rho_A^{(i)},\rho_A)\,.
\end{align}
Therefore, we can write 
\begin{align}
 \ave{e^{-\sigma}}_{\widetilde{P}}=e^{-S(\rho_B)}\sum_{i=1}^{r}e^{-C(\rho_A^{(i)},\rho_{A})}\,,   
\end{align}
so that $L_{\text{otm}}$ is given by 
\begin{align}
    L_{\text{otm}} = S(\rho_B)-\ln\left(\sum_{i=1}^{r}e^{-C(\rho_A^{(i)},\rho_{A})}\right)\,.
\label{eq:LotmQAE}
\end{align}
An important observation is that $\ave{e^{-\sigma}}_{\widetilde{P}}$ includes two terms which characterize the protocols of the QAE. One is the von-Neumann entropy $S(\rho_B)$, which is the informational contribution from the state preparation protocol in the fresh-qubit system $\mathcal{H}_B$. The other one is associated with the quantum cross entropy $C(\rho_A^{(i)},\rho_A)$ with respect to the latent Hilbert space $\mathcal{H}_A$. This quantity can be regarded as a term characterizing the compression protocol of the QAE. In the following, we explore the roles of the quantum cross entropy in the quantum communications and quantum machine learning from the relation of the lower bound $L_{\text{otm}}$ to the loss of Holevo information and the global cost function of the QAE.}

\subsubsection{{Relation to the loss of Holevo information in QAE}}

{Here, we explore the relation between $L_{\text{otm}}$ and the entropic disturbance. Entropic disturbance is the loss of Holevo information through a given quantum channel $\Phi$~\cite{buscemi2009towards,buscemi2016approximate}. Hence, it quantifies the loss of the maximum amount of classical information transmittable through the quantum channel. In Ref.~\cite{buscemi2016approximate}, a lower bound on $\Delta\chi$ was derived. Here, for a given quantum channel $\Phi$, we provide an \textit{upper} bound on the entropic disturbance by using $L_{\text{otm}}$ to provide a operational meaning to the quantum cross entropy in terms of classical information transmission.}

{The entropic disturbance is defined as follows. Let $\mathcal{L}\equiv\{p_i,\rho_i\}_{i=1}^{r}$ denote an ensemble of input state $\rho_{\text{in}}\equiv \sum_{i=1}^{r}p_i\rho_i$ and $\Phi(\mathcal{L})\equiv \{p_i,\Phi(\rho_i)\}_{i=1}^{r}$ denote the ensemble of output state $\rho_{\text{out}}\equiv\Phi(\rho_{\text{in}})=\sum_{i=1}^{r}p_i\Phi(\rho_i)$. Entropic disturbance is defined as $\Delta \chi \equiv \chi(\mathcal{L})-\chi(\Phi(\mathcal{L}))$, where $\chi(\mathcal{L})\equiv S(\rho_{\text{in}})-\sum_{i=1}^{r}p_i S(\rho_i)$ and $\chi(\Phi(\mathcal{L}))\equiv S(\rho_{\text{out}})-\sum_{i=1}^{r}p_iS(\Phi(\rho_i))$ are the Holevo information of $\rho_{\text{in}}$ and $\rho_{\text{out}}$, respectively. In our case, we have $\rho_i=\dya{p_i}$, so that $S(\rho_i)=S(\dya{p_i})=0$. Therefore, due to $\Delta S\geq L_{\text{otm}}$ and Eq.~\eqref{eq:LotmDef}, we can obtain
\begin{align}
\Delta \chi\leq \ln\left(\sum_{i=1}^{r}e^{-C(\Phi(\dya{p_i}),\rho_{\text{out}})}\right)+\sum_{i=1}^{r}p_i S(\Phi(\dya{p_i})),
\label{eq:EntropicDisturbance}
\end{align}
{which shows that the upper bound of the entropic disturbance can be characterized by the quantum cross entropy. }
\footnote{{When $\Phi$ is a unitary operation, we have $L_{\text{otm}}=0$ and $S(\Phi(\dya{p_i}))=0$. Because the entropic disturbance is invariant under the unitary operation $\Delta\chi=0$, we can say that the upper bound is tight for the unitary operation.}}.}

{Now, let us consider the case of QAE, in which $\Phi$ satisfies Eq.~\eqref{eq:QAEevolution}. Due to $S(\Phi(\dya{p_i}))=S(\rho_A^{(i)})+S(\rho_B)$, we have $\sum_{i=1}^{r}p_iS(\Phi(\dya{p_i}))=\sum_{i=1}^{r} p_iS(\rho_A^{(i)})+S(\rho_B)$, so that the upper bound on $\Delta \chi$ in QAE is given by
\begin{align}
    \Delta \chi \leq \sum_{i=1}^{r}p_i S(\rho_A^{(i)})+\ln\left(\sum_{i=1}^{r}e^{-C(\rho_A^{(i)},\rho_A)}\right)\,.
\end{align}
Therefore, the information and the quantum cross entropy of the compressed states contributes to setting an upper bound on entropic disturbance of in the QAE protocol. Also, note that for the QAE, $\Delta\chi$ can be explicitly written as
\begin{align}
\Delta\chi = S(\rho_{\text{in}})-S(\rho_A)-\sum_{i=1}^{r}p_iS(\rho_A^{(i)})\,,
\end{align}
which implies that the loss of the maximum amount of classical information in the QAE protocol is \textit{independent} of the choice of $\rho_B$ but strictly dependent on the compressed and input state. }

\subsubsection{{Relation to the global cost function of QAE}}
{The lower bound $L_{\text{otm}}$ can be also related to the performance of the QAE, which can be characterized by its global cost function. The cost function of the QAE is well-defined when the fresh-qubit state $\rho_B$ is a pure state 
\begin{align}
\rho_B=\dya{\psi}\,.
\label{eq:FreshQubitPure}
\end{align}
In this case, from Eq.~\eqref{eq:LotmQAE} and $S(\rho_B)=S(\dya{\psi})=0$. we have 
\begin{align}
    L_{\text{otm}}= -\ln\left(\sum_{i=1}^{r}e^{-C(\rho_A^{(i)},\rho_A)}\right)\,.
\label{eq:LotmQAEpure}
\end{align}
}

{Let $\eta_B$ denote the reduced state $\eta_B\equiv \Tr_A[U\rho_{\text{in}}U\ad]$. Then, from Refs.~\cite{Romero17,cerezo2021barren}, the global cost function $\mathscr{C}$ can be given by 
\begin{align}
    \mathscr{C}\equiv 1-\bramatket{\psi}{\eta_B}{\psi}\,,
    \label{eq:GlobalCostQAE}
\end{align}
which satisfies $0\leq \mathscr{C}\leq 1$. The ultimate goal of this protocol is to find a optimal unitary $U_{*}$ to realize $\rho_{\text{in}}=\rho_{\text{out}}$. In this optimal case, the global cost function is $\mathscr{C}=0$. Then, by using $d_B$ the dimension of $\mathcal{H}_B$ and the global cost function, we can obtain the following inequality (See Appendix.~\ref{app:LandGlobalCost} for the proof)
\begin{align}
    L_{\text{otm}}\leq\Delta S\leq 2\ln\left(\sqrt{1-\mathscr{C}}+\sqrt{(d_B-1)\mathscr{C}}\right)\,.
\label{eq:LandGlobalCost}
\end{align}
From Eq.~\eqref{eq:LotmQAEpure}, we can finally obtain
\begin{align}
    \sum_{i=1}^{r}e^{-C(\rho_A^{(i)},\rho_{A})}\geq\left(\frac{1}{\sqrt{1-\mathscr{C}}+\sqrt{(d_B-1)\mathscr{C}}}\right)^2\,,
\end{align}
which shows that the quantum cross entropy plays a role as an informational contribution of the compressed state to the performance of the QAE protocol.
Here, note that, due to $0\leq\mathscr{C}\leq 1$, we have $0\leq 2\ln\left(\sqrt{1-\mathscr{C}}+\sqrt{(d_B-1)\mathscr{C}}\right)\leq \ln(d_B)$, where $\mathscr{C}=0$ (i.e., $\eta_B=\dya{\psi}$) leads to the minimum, and $\mathscr{C}=1-1/d_B$ (i.e., $\eta_B=\id_B/d_B$) leads to the maximum. 
}

{We can also check the consistency of Eq.~\eqref{eq:LandGlobalCost} by considering the optimal case. The optimal unitary $U_{*}$ is a disentangling gate~\cite{ma2020compression}, so that $U_{*}$ satisfies 
\begin{align}
    \begin{split}
      U_{*}\rho_{\text{in}}U_{*}\ad&=\rho_A\otimes\dya{\psi}\\
        U_{*}\dya{p_i}U_{*}\ad &= \rho_{A}^{(i)}\otimes\dya{\psi}\,.
    \end{split}
\label{eq:OptUnitaryCondition}
\end{align}
Then, by using the unitary invariance of the quantum cross entropy, we obtain
\begin{align}
\begin{split}
C\left(\rho_A^{(i)},\rho_A\right)&=C\left(\rho_A^{(i)}\otimes\dya{\psi}, \rho_A\otimes\dya{\psi}\right)\\
&=C(\dya{p_i},\rho_{\text{in}})\\
&=-\ln p_i\,.
\end{split}
\end{align}
In this way, we have $L_{\text{otm}}=0$.
By definition, in the optimal case, we have $\mathscr{C}=0$; therefore, $\ln\left(\sqrt{1-\mathscr{C}}+\sqrt{(d_B-1)\mathscr{C}}\right)=0$. From $L_{\text{otm}}\leq \Delta S\leq 2\ln\left(\sqrt{1-\mathscr{C}}+\sqrt{(d_B-1)\mathscr{C}}\right)$, we get the expected result $\Delta S=0$ for the optimal unitary case. }

\subsection{{Maximum Available Work Theorem}}
\label{subsec:2ndLaw}
{For the second example, we explore the role of quantum cross entropy in work extraction from the quantum thermodynamic systems by from the relation between $L_{\text{otm}}$ and the maximum available work theorem~\cite{deffner2017kibble}}.

{In Ref.~\cite{deffner2017kibble}, a generic quantum thermodynamic system is regarded as a tripartite system composed of the system, work reservoir and heat bath. In this setup, the work $\ave{W}$, the internal energy change of the system $\Delta E_s$ and the internal energy change of the heat bath $\Delta E_b$ satisfy the first law of thermodynamics $\Delta E_s+\Delta E_b=\ave{W}$. Also, when $\Delta S$ and $\Delta S_b$ denote the von-Neumann entropy change of the system and heat bath, respectively, the second law of thermodynamics states $\Delta S+\Delta S_b\geq 0$. Because the heat reservoir is so large, which can be regarded as being always in equilibrium at inverse temperature $\beta$, we can write $\Delta S_b=\beta \Delta E_b$. Then, the maximum available work theorem states 
\begin{align}
\ave{W}\geq \Delta E_s-\beta^{-1}\Delta S\equiv\Delta \mathcal{E}\,,
\label{eq:MaximumAvailableWorkTheorem}
\end{align}
where $\mathcal{E}$ is called exergy or availability, which quantifies the maximally available work.}

{In this setup, from Theorem.~\ref{th:JarzynskiOTM}, we can obtain the upper bound on the exergy $\Delta \mathcal{E}$ as
\begin{align}
\Delta\mathcal{E}\leq \Delta E_s+\beta^{-1}\ln\left(\sum_{i=1}^{r}e^{-C(\Phi(\dya{p_i}),\rho_{\text{out}})}\right)\,.
\end{align}
This demonstrates the informational contribution of the quantum cross entropy in extracting maximally available work in the quantum thermodynamic systems. If there is no work reservoir, i.e., $\ave{W}=0$, the corresponding maximum available work theorem becomes $\Delta S\geq -\beta \Delta E_b$. However, when the system undergoes the energy-emitting process ($\Delta E_b\geq 0$) described by a unital evolution, from Corollary~\ref{cor:OTMUnitalMapBound}, we have a tighter bound as 
\begin{align}
\Delta S\geq L_{\text{otm}}\geq -\beta\Delta E_b\,.
\label{eq:TighterBoundEnergyExchange}
\end{align}
}

{A good example of the energy-emitting unital evolution is the spin-boson model~\cite{schlosshauer2007decoherence}.} Let us consider a system $\mathcal{H}_s$ initially prepared in  $\rho_{\text{in}}$ coupled to a heat bath $\mathcal{H}_b$, whose initial state is prepared in the Gibbs state
\begin{align}
\rho_b^{\text{eq}}\equiv\frac{e^{-\beta H_b}}{Z}\,,
\end{align}
where $Z\equiv\Tr\left[e^{-\beta H_b}\right]$ is the canonical partition function with inverse temperature $\beta$ and $H_b$ the time-independent bare Hamiltonian of the bath. Then, when $\Phi$ is a thermal operation~\cite{sagawa2012thermodynamics,ng2018resource,Goold16} from $t=0$ to $t=\tau$,
\begin{align}    \rho_{\text{out}}=\Phi(\rho_{\text{in}})=\Tr_b\left[U_{\tau}(\rho_{\text{in}}\otimes\rho_b^{\text{eq}})U_{\tau}\ad\right]\,.
\label{eq:SpinBosonThermalOp}
\end{align}

{In the quantum thermodynamic setup of the spin-boson model, $\mathcal{H}_s$ and $\mathcal{H}_b$ usually describes a two-level atomic system and the bosonic heat bath, respectively (See Fig.~\ref{fig:spin-boson}), and the atomic system in Eq.~\eqref{eq:SpinBosonThermalOp} undergoes the dephasing process, which is described by a unital map.}
\begin{figure}[htp!]
    \centering
    \includegraphics[width=.7\columnwidth]{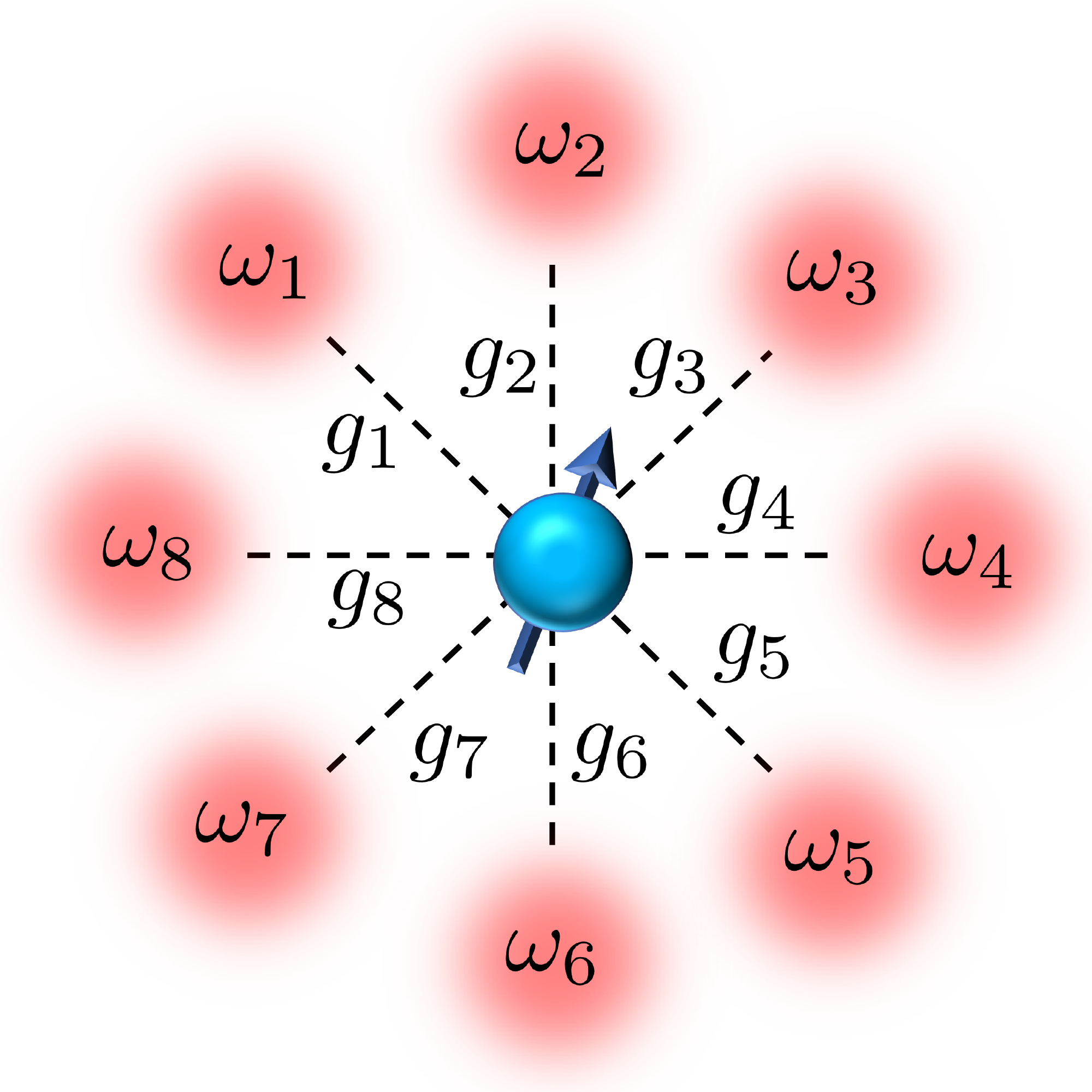}
    \caption{\textbf{Spin-boson model:} A two-level atom is interacting with multiple boson modes. Each boson mode is decoupled from each other, and has different angular frequency $\omega_k$. The atomic system is coupled to each mode with different interaction strength $g_k$. The Hamiltonian of the spin-boson model is described in Eq.~\eqref{eq:spin-boson-Hamiltonian}.}
    \label{fig:spin-boson}
\end{figure} 
This model can be descried by the following time-independent Hamiltonian (we set $\hbar=1$)
\begin{align}
    H=\frac{\omega_0}{2}\sigma_z +H_b+\sigma_z\otimes\sum_k(g_ka_k+g_k^* a_k\ad), 
\label{eq:spin-boson-Hamiltonian}
\end{align}
where 
\begin{align}
H_b \equiv \sum_k \omega_ka_k\ad a_k
\end{align}
is the bare Hamiltonian of the boson fields, and $\sigma_z\equiv\diag(1,-1)$ is the Pauli's Z operator acting on the atom. {$\omega_0$ and $\omega_k$ are the angular frequencies of the atom and the $k$-th boson mode, and $g_k$ denotes the coupling strength between the atom and the $k$-th boson mode.} Here, in general $g_k$ is a complex number, and $g_k^{*}$ denotes the complex conjugate of $g_k$. Also, $a_k(a_k\ad)$ is the annihilation (creation) operator of $k$-th mode of the boson fields. The Hamiltonian Eq.~\eqref{eq:spin-boson-Hamiltonian} describes an interaction between atom and boson fields with multiple modes, which leads to the dephasing process of the atomic system. In interaction picture, we have
\begin{align}
H(t)=\sigma_z\otimes\sum_k(g_k a_k e^{-i\omega_k t}+g_k^* a_k\ad e^{+i\omega_k t}).    
\end{align}
{During the evolution from $t=0$ to $t=\tau$, the internal energy change of the heat bath becomes (See Appendix.~\ref{app:InternalEnergyBath} for the proof)
\begin{align}
\begin{split}
\Delta E_b &\equiv \Tr\left[ U_{\tau} (\rho_{\text{in}}\otimes\rho_b^{\text{eq}}) U_{\tau}\ad H_b\right]-\Tr\left[\rho_b^{\text{eq}} H_b\right]\\
&=\sum_{k}\omega_k \abs{g_k}^2\left(\frac{\sin(\omega_k\tau/2)}{\omega_k/2}\right)^2\geq 0\,,
\end{split}
\label{eq:InternalEnergyBath}
\end{align}
which shows that the system undergoes the energy-emitting process, which can be verified from the energy conservation of the total system. Considering the noise spectral density $J(\omega)=\sum_k|g_k|^2\omega\delta(\omega-\omega_k)$, we can obtain
        \begin{align} 
        \Delta E_b=\int_{-\infty}^{\infty}J(\omega)\left(\frac{\sin(\omega\tau/2)}{\omega/2}\right)^2d\omega\,.
        \end{align}
        Since we have $\lim_{\tau\to\infty}\frac{\sin(\omega \tau/2)}{\omega/2}=\delta\left(\omega/2\right)=2\delta(\omega)$,
        where we used the relation $\lim_{\tau\to\infty} \left(\tau\frac{\sin(\omega\tau)}{\omega\tau}\right)=\delta(\omega)$, we obtain \begin{align}
        \lim_{\tau\to\infty}\Delta E_b=\int_{-\infty}^{\infty}4J(\omega)\delta^2(\omega)d\omega=4J(0)\delta(0)=0\,.
        \end{align}
	This is intuitively consistent because there will be no energy exchange between a small system and a large bath for the dephasing process at time $\tau\to \infty$.} 

{In the spin-boson model, the work reservoir is not taken into account. Therefore, we have $\Delta S \geq -\beta\Delta E_b$. Since the two-level atom undergoes the unital evolution, the total information production has to be $\Delta S\geq 0$. However, from Eq.~\eqref{eq:InternalEnergyBath}, we have $-\beta\Delta E_b\leq 0$, which shows that the lower bound in the maximum available work theorem for the spin-boson model is not tight enough. Instead, from Corollary.~\ref{cor:OTMUnitalMapBound}, we can find that $L_{\text{otm}}$ is the tighter bound as demonstrated in   Eq.~\eqref{eq:TighterBoundEnergyExchange}. Because $L_{\text{otm}}$ is characterized by the quantum cross entropy, the quantum cross entropy becomes a more meaningful quantity to characterize the quantum process of a system induced from its interaction with a heat bath.}

\section{Conclusion}
\label{sec:conc}

{In conclusion, we have proposed a distribution of an information production for a quantum state of arbitrary rank and a quantum channel by adopting the one-time measurement scheme. The derived Jarzynski-like equality and the lower bound on the total information production are characterized by the quantum cross entropy, which further enables one to explore the roles of quantum cross entropy in quantum communications, quantum machine learning and quantum thermodynamics. By focusing on the quantum autoencoder, we have explored the informational contributions of the quantum cross entropy of the compressed states to the loss of the maximum classical information transmittable through the circuit and the performance of the protocol characterized by the global cost function. We have also demonstrated the application of our result in the quantum thermodynamics by exploring relation between the quantum cross entropy and the maximum available work theorem. Our result can provide insights of the quantum cross entropy to assist designing quantum information processing protocols which utilize the quantum cross entropy as a resource to achieve some tasks. As a valuable future direction, we will explore reverse process in the OTM scheme, which still remains open. }

\section*{Acknowledgements}
We would like to thank {Sebastian Deffner}, Hailan Ma, Yuanlong Wang, and Marco Cerezo for helpful discussions. This work is supported by NSF: Proposal 2036347 (STTR Phase 1).
A.S. was supported by the internal R\&D from Aliro Technologies, Inc. Now, he is supported by the startup package from University of Massachusetts Boston. 
N.Y. is supported by MEXT Quantum Leap Flagship Program Grants No. JPMXS0118067285 and No. JPMXS0120319794.
T.J.H is supported by the graduate study program at University of Massachusetts Boston. 
Work by P.N. is supported by the NSF RAISE-QAC-QSA, Grant No. DMR-2037783 and the Department of Energy, Office of Basic Energy Sciences Grant DE-SC0019215. 
\\


	\setcounter{section}{0}
	\setcounter{figure}{0}
	\setcounter{corollary}{0}
	\setcounter{theorem}{0}
	\setcounter{proposition}{0}
	\setcounter{equation}{0}

	\appendix

\section*{Appendix}

\section{{Proof of Eq.~\eqref{eq:LandGlobalCost}}}
\label{app:LandGlobalCost}
	
Here, we provide a detailed proof of Eq.~\eqref{eq:LandGlobalCost}. First, by using the cost function $\mathscr{C}$, we can obtain an upper bound on $\Delta S$. From 
\begin{align}
    \rho_{\text{out}} = U\ad\left(\rho_A\otimes\dya{\psi}\right)U
\end{align}
with $\rho_A\equiv\Tr_B[U\rho_{\text{in}}U\ad]$,
because the von-Neumann entropy is unitarily invariant, we have 
\begin{align}
\begin{split}
    S(\rho_{\text{out}}) &= S(\rho_A\otimes\dya{\psi})=S(\rho_A)\\
    S(\rho_{\text{in}})&=S(U\rho_{\text{in}}U\ad)\,,
\end{split}
\end{align}
which leads to
\begin{align}
    \Delta S = S(\rho_A)-S(U\rho_{\text{in}}U\ad)\,.
\end{align}
Because $\eta_B\equiv\Tr_A\left[U\rho_{\text{in}}U\ad\right]$, from Araki-Lieb inequality~\cite{Araki70}
\begin{align}
    \abs{S(\rho_A)-S(\eta_B)}\leq S(U\rho_{\text{in}}U\ad)\,,
\end{align}
we have  
\begin{align}
\Delta S \leq S(\eta_B)\,.
\end{align}
By using $d_B$ the dimension of the Hilbert space $\mathcal{H}_B$, $S(\eta_B)$ can be upper bounded as
\begin{align}
\begin{split}
    S(\eta_B) 
    &=\ln(d_B)-S\left(\eta_B\,\Big|\Big|\,\frac{\id_B}{d_B}\right)\\
    &\leq\ln(d_B)-S_{\min}\left(\eta_B\,\Big|\Big|\,\frac{\id_B}{d_B}\right)\,,
\end{split}
\end{align}
where 
\begin{align}
    S_{\min}(\rho_1||\rho_2)\equiv -\ln\left(F[\rho_1,\rho_2]\right)
\end{align}
denotes the sandwiched min relative entropy of $\rho_1$ with respect to $\rho_2$~\cite{wilde2014strong,beigi2013sandwiched,Sagawa20} with the standard quantum fidelity defined as
\begin{align}
F[\rho_1,\rho_2]\equiv \left(\Tr\left[\sqrt{\rho_1^{1/2}\rho_2\rho_1^{1/2}}\right]\right)^2\,.
\end{align}
Therefore, we have 
\begin{align}
    S(\eta_B)\leq \ln\left(d_B F\left[\eta_B,\frac{\id_B}{d_B}\right]\right)\,.
\label{eq:EntropyUpperbound}
\end{align}

Here, we consider so-called generalized quantum fidelity~\cite{Tomamichel2010duality,Tomamichel2015quantum,Cappellini2007subnormalized,Cerezo19Fidelity}, which is defined as
\begin{align}
\widetilde{F}\left[\sigma_1,\sigma_2\right]\equiv \left(\sqrt{F\left[\sigma_1,\sigma_2\right]}+\sqrt{(1-\Tr\left[\sigma_1\right])(1-\Tr\left[\sigma_2\right])}\right)^2.
\end{align}
Here, note that $\sigma_1$ and $\sigma_2$ are the sub-normalized states i.e. $0\leq \Tr\left[\sigma_1\right]\leq 1$ and $0\leq \Tr[\sigma_2]\leq 1$. Because applying the projection operator $\dya{\psi}$ is described by the completely positive trace non-increasing (CPTNI) map~\cite{Cerezo19Fidelity}, from the monotonicity of the generalized quantum fidelity under the CPTNI maps~\cite{Tomamichel2010duality,Tomamichel2015quantum,Cappellini2007subnormalized,Cerezo19Fidelity}, we have 
\begin{align}
\begin{split}
    F\left[\eta_B,\frac{\id_B}{d_B}\right] 
    &=\widetilde{F}\left[\eta_B,\frac{\id_B}{d_B}\right]\\
    &\leq\widetilde{F}\left[\dya{\psi}\eta_B\dya{\psi},\frac{1}{d_B}\dya{\psi}\right]\,.
\end{split}
\end{align}
Since we have
\begin{align}
\begin{split}
\widetilde{F}&\left[\dya{\psi}\eta_B\dya{\psi},\frac{1}{d_B}\dya{\psi}\right]\\
&=\left(\sqrt{\frac{1-\mathscr{C}}{d_B}}+\sqrt{\mathscr{C}\left(1-\frac{1}{d_B}\right)}\right)^2\,,
\end{split}
\end{align}
we can obtain 
\begin{align}
    \Delta S \leq S(\eta_B)\leq 2\ln\left(\sqrt{1-\mathscr{C}}+\sqrt{(d_B-1)\mathscr{C}}\right)\,,
\end{align}
which states that the information production in quantum autoencoder with pure fresh-qubit state can be upper bounded by using $\mathscr{C}$. 
Therefore, from Theorem.~\ref{th:JarzynskiOTM}, we can finally arrive at 
Eq.~\eqref{eq:LandGlobalCost}
\begin{align}
    L_{\text{otm}} \leq \Delta S\leq 2\ln\left(\sqrt{1-\mathscr{C}}+\sqrt{(d_B-1)\mathscr{C}}\right)\,.
\end{align}

\section{{Proof of Eq.~\eqref{eq:InternalEnergyBath}}}
\label{app:InternalEnergyBath}
We demonstrate the detailed proof of Eq.~\eqref{eq:InternalEnergyBath} based on Refs.~\cite{cappellaro201222,Sone20a}. The Hamiltonian of the spin-boson model is
        \begin{align}
         H = \frac{\omega_0}{2}\sigma_z+ H_b+\sigma_z\otimes\sum_k(g_ka_k+g_k^*a_k\ad)\,,   
        \end{align}
        where we define $\sigma_z\equiv\begin{pmatrix}1&0\\0&-1\end{pmatrix}$ and $a_k(a_k\ad)$ as the annihilation (creation) operator of $k$-th mode of the boson heat bath. The annihilation and creation operators satisfy the commutation relation
        \begin{align}
            [a_k,a_{k'}\ad]=\delta_{kk'},~[a_k,a_{k'}]=[a_k\ad,a_{k'}\ad]=0\,.
        \end{align}
        Also, $H_b$ is defined as 
        \begin{align}
          H_b\equiv\sum_{k}\omega_ka_k\ad a_k\,.
        \end{align}
        Then, the Hamiltonian in the interaction picture becomes
        \begin{align}
            H(t) = \sigma_z\otimes \sum_{k}(g_ka_ke^{-i\omega_k t}+g_k^*a_k\ad e^{i\omega_k t})\,. 
        \end{align}
        Using Magnus expansion, the propagator becomes
\begin{align}
    U_t=\exp\left[-it (\overline{H}_0+\overline{H}_1)\right],
    \label{eq:spin-boson-propa}
\end{align}
where the higher terms are vanishing because $[H(t_1),H(t_2)]$ becomes just a number (See Eq.~\eqref{eq:CommutatorNumber}). Here, we define
\begin{align}
    \overline{H}_0\equiv\frac{1}{t}\int_{0}^{t}H(t_1)dt_1
\end{align}
and
\begin{align}
    \overline{H}_1\equiv-\frac{i}{2t}\int_{0}^tdt_1\int_{0}^{t_1}dt_2 [H(t_1),H(t_2)].
\end{align}
More explicitly, $\overline{H}_0$ can be written as 
\begin{align}
    \overline{H}_0= \sigma_z \otimes\sum_{k}\left(G_k(t)a_k+G^*(t)a_k\ad \right),
\label{eq:H_0}
\end{align}
where 
\begin{align}
     G_k(t)\equiv g_k\frac{\sin(\omega_kt/2)}{\omega_kt/2}e^{-i\omega_kt/2}.
\label{eq:G}
\end{align}
For $\overline{H}_1$, because we have
\begin{align}
[H(t_1),H(t_2)]=-2i\sum_k|g_k|^2\sin\left(\omega_k(t_1-t_2)\right)\,
\label{eq:CommutatorNumber}
\end{align}
and
\begin{align}
\begin{split}
\int_{0}^{t}dt_1\int_{0}^{t_1}&dt_2\sin\left(\omega_k(t_1-t_2)\right)\\
&=\frac{1}{\omega_k}\left(t-\frac{1}{\omega_k}\sin(\omega_k t)\right)\,,
\end{split}
\end{align}
we can write
\begin{align}
    \overline{H}_1
    =-\sum_k\frac{|g_k|^2}{\omega_k}\left(1-\frac{\sin(\omega_k t)}{\omega_k t}\right)\in\mathbb{R}\,,
\label{eq:H_1}
\end{align}
which is just a real number.
 
Therefore, from Eqs.~ \eqref{eq:spin-boson-propa}, \eqref{eq:H_0}, and \eqref{eq:H_1}, the propagator becomes
\begin{align}
    U_t=\exp\left[-it\sum_k\left( \sigma_z \otimes(G_k(t)a_k+G_k^*(t)a_k\ad )\right)\right]e^{-it \overline{H}_1}\,.
\label{eq:final_propagator}
\end{align}
With this propagator, due to
\begin{align}
\begin{split}
\!\!\!\! \left[\sum_{k'}\sigma_z\otimes(G_{k'}(t)a_{k'}+G_{k'}^*(t)a_{k'}\ad),a_k\right]&\!=\!-G_k^*(t) \sigma_z\\
\!\!\!\! \left[\sum_{k'}\sigma_z\otimes(G_{k'}(t)a_{k'}+G_{k'}^*(t)a_{k'}\ad),a_k\ad\right]&\!=\!G_k(t) \sigma_z\,,
\end{split}
\end{align}
from Baker-Hausdorff-Campbell's formula, we have
\begin{align}
\begin{split}
    U_t\ad a_k U_t&=a_k-it G_k^*(t) \sigma_z\\
    U_t\ad a_k\ad U_t&=a_k\ad+it G_k(t)\sigma_z \,.
\end{split}
\end{align}
Therefore, we can write
\begin{align}
\begin{split}
\!\!\!\!
U_t\ad H_bU_t=&\sum_{k}\omega_k \left(U_t\ad a_k\ad U_t\right)\left(U_t\ad a_kU_t\right)\\
    =&H_b+it\sum_k\omega_k  \sigma_z \otimes (G_k(t)a_k-G_k^*(t)a_k\ad)\\
    &+\sum_k\omega_k |G_k(t)|^2t^2\,.
\end{split}
\label{eq:HE_evolution}
\end{align}

Let $\rho_{\text{in}}$ be the input state of the system with a rank $r$, and the initial state of the boson heat bath be the Gibbs state
\begin{align}
    \rho_{b}^{\text{eq}} = \frac{e^{-\beta H_b}}{Z}\,.
\end{align}
Note that, with $a_k$ and $a_k\ad$, for all $k$, we have 
\begin{align}
    \Tr\left[\rho_b^{\text{eq}} a_k\right]=\Tr\left[\rho_b^{\text{eq}}a_k\ad\right] = 0\,.
\label{eq:Ave_a_adagger}
\end{align}
We assume that the two-level atomic system and bosonic field are initially decoupled. Therefore, the initial state of the total system is $\rho_{\text{in}}\otimes\rho_b^{\text{eq}}$, so that the evolution of the atomic system from $t=0$ to $t=\tau$ is described by the following thermal operation
\begin{align}
    \rho_{\text{out}} = \Phi(\rho_{\text{in}}) = \Tr_b\left[U_{\tau} (\rho_{\text{in}}\otimes\rho_b^{\text{eq}}) U_{\tau}\ad\right]\,.
\end{align} 
The internal energy change of the heat bath during the evolution can be defined as the difference in the average energy of the heat bath at $t=\tau$ and $t=0$
\begin{align}
\Delta E_b \equiv \Tr\left[ U_{\tau} (\rho_{\text{in}}\otimes\rho_b^{\text{eq}}) U_{\tau}\ad H_b\right]-\Tr\left[\rho_b^{\text{eq}} H_b\right]\,. 
\end{align}
From Eqs.~\eqref{eq:G}, \eqref{eq:HE_evolution} and \eqref{eq:Ave_a_adagger}, $\Delta E_b$ can be explicitly written as
\begin{align}
    \Delta E_b = \sum_k\omega_k\abs{g_k}^2\left(\frac{\sin\left(\omega_k\tau/2\right)}{\omega_k/2}\right)^2\geq 0\,.
\end{align}

\section{{Second-law-like Inequality Involving Guessed Heat}}
\label{app:2ndLawGuessedHeat}
The information production distribution $\widetilde{P}(\sigma)$ can be related to the distribution of the internal energy difference in OTM scheme $\widetilde{P}(\Delta E_s)$ in a very special case, which leads to a second-law-like inequality involving the guessed heat introduced in Ref.~\cite{Sone20a}. Let $H_s(t)$ be the system's bare Hamiltonian, which is time-dependent. Also, suppose that the system is initially decoupled from the heat bath, which is initially prepared in a Gibbs state
\begin{align}
    \rho_b^{\text{eq}}=\frac{e^{-\beta H_b}}{Z_b}\,,
\end{align}
where $H_b$ is the bath's bare Hamiltonian, which is time-independent. Here, $Z_b\equiv\Tr\left[e^{-\beta H_b}\right]$ is the partition function defined by $H_b$. Let $H_{\text{int}}$ be the interaction Hamiltonian. Then, the unitary operator $U_{t}$ describing the time evolution of the total system follows the Schr\"{o}dinger's equation $\partial_{t}U_{t} = -i(H_s(t)+H_b+H_{\text{int}})U_{t}$ with $U_0\equiv\id$. Evolving the total system from $t=0$ to $t=\tau$ and focusing on the system alone, we have  
\begin{align}
    \rho_{\text{out}}=\Phi(\rho_{\text{in}}) = \Tr_b\left[U_{\tau}(\rho_{\text{in}}\otimes \rho_{b}^{\text{eq}})U_{\tau}\ad\right]\,.
\end{align}

Let $\{E_i,\ket{E_i}\}_{i=1}^{d}$ be an eigensystem of $H_s(0)$. When we have 
\begin{align}
\begin{split}
\rho_{\text{in}}&=\rho_s^{\text{eq}}(0)\equiv \frac{e^{-\beta H_s(0)}}{Z_0}\\
\rho_{\text{out}}&=\rho_s^{\text{eq}}(\tau)\equiv \frac{e^{-\beta H_s(\tau)}}{Z_{\tau}}\,,
\end{split}
\end{align}
where $Z_0\equiv\Tr\left[e^{-\beta H_s(0)}\right]$ and $Z_{\tau}\equiv \Tr\left[e^{-\beta H_s(\tau)}\right]$ are the partition functions defined by $H_s(0)$ and $H_s(\tau)$, respectively, from Eq.~\eqref{eq:DisOTM}, the information production distribution in the OTM scheme becomes
        \begin{align}
          \widetilde{P}(\sigma)=\frac{1}{\beta}\sum_{i=1}^{d}\frac{e^{-\beta E_i}}{Z_0}\delta\left(\frac{\sigma}{\beta}+\Delta F-\Delta\widetilde{E}(E_i)\right)\,,
        \end{align}
        where 
        \begin{align}
        \Delta\widetilde{E}(E_i)\equiv \Tr\left[\Phi(\dya{E_i})H_{s}(\tau)\right]-E_i\,,
        \end{align}
        and 
        \begin{align}
        \Delta F \equiv -\beta^{-1}\ln\left(\frac{Z_{\tau}}{Z_0}\right)
        \end{align}
        is the equilibrium Helmholtz free energy difference.  
        From Ref.~\cite{Sone20a}, $\widetilde{P}(\Delta E_s)$ is given by 
        \begin{align}
            \widetilde{P}(\Delta E_s) \!=\! \sum_{i=1}^{d}\frac{e^{-\beta E_i}}{Z_0}\delta\left(\Delta E_s-\Delta\widetilde{E}(E_i)\right)\,.
        \end{align}
        Therefore, we can write
        \begin{align}
            \widetilde{P}(\Delta E_s)=\beta\widetilde{P}(\sigma)
            \label{eq:app:EandInfProd}
        \end{align}
        with the random variable $\sigma$ being
        \begin{align}
            \sigma = \beta(\Delta E_s-\Delta F)\,.
            \label{eq:app:SigmaEandF}
        \end{align}
        Following Ref.~\cite{Sone20a}, we have 
        \begin{align}
            \ave{e^{-\beta \Delta E_s}}_{\widetilde{P}}=e^{-\beta \Delta F}e^{-\beta \ave{\widetilde{Q}}_b}e^{-S(\Theta_{sb}(\tau)||\rho_s^{\text{eq}}(\tau)\otimes\rho_{b}^{\text{eq}})}\,,
        \label{eq:app:OTMJarzynski}
        \end{align}
        where
       \begin{align}
            \Theta_{sb}(\tau) \equiv \sum_{i=1}^{d}\frac{e^{-\beta\Tr\left[\Phi(\dya{E_i})H_s(\tau)\right]}}{\widetilde{Z}_{\tau}}U_{\tau}\left(\dya{E_i}\otimes\rho_b^{\text{eq}}\right)U_{\tau}\ad\,
        \end{align}
        is called ``guessed state", and $\ave{\widetilde{Q}}_b$ is a heat-like quantity called ``guessed heat" defined as 
        \begin{align}
            \ave{\widetilde{Q}}_b \equiv \Tr\left[H_b\rho_b^{\text{eq}}\right]-\Tr\left[H_b\Theta_{sb}(\tau)\right]\,.
        \end{align}
        This heat-like quantity describes an energy dissipation from the heat bath as if its final state is $\Tr_s\left[\Theta_{sb}(\tau)\right]$ the reduced state of the guessed state. From Eq.~\eqref{eq:app:EandInfProd} and \eqref{eq:app:SigmaEandF}, we can obtain
        \begin{align}
            \ave{e^{-\beta \Delta E_s}}_{\widetilde{P}} = e^{-\beta \Delta F}\ave{e^{-\sigma}}_{\widetilde{P}}\,.
        \end{align}
        From Eq.~ \eqref{eq:app:OTMJarzynski}, we can finally write
        \begin{align}
        \ave{e^{-\sigma}}_{\widetilde{P}}=e^{-\beta\ave{\widetilde{Q}}_b}e^{-S(\Theta_{sb}(\tau)||\rho_s^{\text{eq}}(\tau)\otimes\rho_b^{\text{eq}})}\,.
        \end{align}
        By using Jensen's inequality and the non-negativity of the quantum relative entropy, we can arrive at 
        \begin{align}
        \Delta S-\beta\ave{\widetilde{Q}}_b\geq 0\,,
        \end{align}
        which is the second-law-like inequality involving the guessed heat.

\bibliography{ref.bib}

\begin{thebibliography}{91}%
\makeatletter
\providecommand \@ifxundefined [1]{%
 \@ifx{#1\undefined}
}%
\providecommand \@ifnum [1]{%
 \ifnum #1\expandafter \@firstoftwo
 \else \expandafter \@secondoftwo
 \fi
}%
\providecommand \@ifx [1]{%
 \ifx #1\expandafter \@firstoftwo
 \else \expandafter \@secondoftwo
 \fi
}%
\providecommand \natexlab [1]{#1}%
\providecommand \enquote  [1]{``#1''}%
\providecommand \bibnamefont  [1]{#1}%
\providecommand \bibfnamefont [1]{#1}%
\providecommand \citenamefont [1]{#1}%
\providecommand \href@noop [0]{\@secondoftwo}%
\providecommand \href [0]{\begingroup \@sanitize@url \@href}%
\providecommand \@href[1]{\@@startlink{#1}\@@href}%
\providecommand \@@href[1]{\endgroup#1\@@endlink}%
\providecommand \@sanitize@url [0]{\catcode `\\12\catcode `\$12\catcode
  `\&12\catcode `\#12\catcode `\^12\catcode `\_12\catcode `\%12\relax}%
\providecommand \@@startlink[1]{}%
\providecommand \@@endlink[0]{}%
\providecommand \url  [0]{\begingroup\@sanitize@url \@url }%
\providecommand \@url [1]{\endgroup\@href {#1}{\urlprefix }}%
\providecommand \urlprefix  [0]{URL }%
\providecommand \Eprint [0]{\href }%
\providecommand \doibase [0]{https://doi.org/}%
\providecommand \selectlanguage [0]{\@gobble}%
\providecommand \bibinfo  [0]{\@secondoftwo}%
\providecommand \bibfield  [0]{\@secondoftwo}%
\providecommand \translation [1]{[#1]}%
\providecommand \BibitemOpen [0]{}%
\providecommand \bibitemStop [0]{}%
\providecommand \bibitemNoStop [0]{.\EOS\space}%
\providecommand \EOS [0]{\spacefactor3000\relax}%
\providecommand \BibitemShut  [1]{\csname bibitem#1\endcsname}%
\let\auto@bib@innerbib\@empty
\bibitem [{\citenamefont {Deffner}\ and\ \citenamefont
  {Campbell}(2019)}]{DeffnerBook19}%
  \BibitemOpen
  \bibfield  {author} {\bibinfo {author} {\bibfnamefont {S.}~\bibnamefont
  {Deffner}}\ and\ \bibinfo {author} {\bibfnamefont {S.}~\bibnamefont
  {Campbell}},\ }\href {https://iopscience.iop.org/book/978-1-64327-658-8}
  {\emph {\bibinfo {title} {Quantum Thermodynamics}}}\ (\bibinfo  {publisher}
  {Morgan and Claypool Publishers, San Rafael},\ \bibinfo {year}
  {2019})\BibitemShut {NoStop}%
\bibitem [{\citenamefont {Binder}\ \emph {et~al.}(2019)\citenamefont {Binder},
  \citenamefont {Correa}, \citenamefont {Gogolin}, \citenamefont {Anders},\
  and\ \citenamefont {Adesso}}]{Binder19}%
  \BibitemOpen
  \bibfield  {author} {\bibinfo {author} {\bibfnamefont {F.}~\bibnamefont
  {Binder}}, \bibinfo {author} {\bibfnamefont {L.~A.}\ \bibnamefont {Correa}},
  \bibinfo {author} {\bibfnamefont {C.}~\bibnamefont {Gogolin}}, \bibinfo
  {author} {\bibfnamefont {J.}~\bibnamefont {Anders}},\ and\ \bibinfo {author}
  {\bibfnamefont {G.}~\bibnamefont {Adesso}},\ }\href
  {https://link.springer.com/book/10.1007/978-3-319-99046-0} {\emph {\bibinfo
  {title} {Thermodynamics in the Quantum Regime}}}\ (\bibinfo  {publisher}
  {Springer},\ \bibinfo {year} {2019})\BibitemShut {NoStop}%
\bibitem [{\citenamefont {Vinjanampathy}\ and\ \citenamefont
  {Anders}(2016)}]{Anders16}%
  \BibitemOpen
  \bibfield  {author} {\bibinfo {author} {\bibfnamefont {S.}~\bibnamefont
  {Vinjanampathy}}\ and\ \bibinfo {author} {\bibfnamefont {J.}~\bibnamefont
  {Anders}},\ }\bibfield  {title} {\bibinfo {title} {Quantum thermodynamics},\
  }\href {https://doi.org/10.1080/00107514.2016.1201896} {\bibfield  {journal}
  {\bibinfo  {journal} {Contemp. Phys.}\ }\textbf {\bibinfo {volume} {57}},\
  \bibinfo {pages} {545} (\bibinfo {year} {2016})}\BibitemShut {NoStop}%
\bibitem [{\citenamefont {Sagawa}(2012)}]{sagawa2012thermodynamics}%
  \BibitemOpen
  \bibfield  {author} {\bibinfo {author} {\bibfnamefont {T.}~\bibnamefont
  {Sagawa}},\ }\bibfield  {title} {\bibinfo {title} {Thermodynamics of
  information processing in small systems},\ }\href
  {https://doi.org/10.1143/PTP.127.1} {\bibfield  {journal} {\bibinfo
  {journal} {Prog. Theor. Phys.}\ }\textbf {\bibinfo {volume} {127}},\ \bibinfo
  {pages} {1} (\bibinfo {year} {2012})}\BibitemShut {NoStop}%
\bibitem [{\citenamefont {Goold}\ \emph {et~al.}(2016)\citenamefont {Goold},
  \citenamefont {Huber}, \citenamefont {Riera}, \citenamefont {del Rio},\ and\
  \citenamefont {Skrzypczyk}}]{Goold16}%
  \BibitemOpen
  \bibfield  {author} {\bibinfo {author} {\bibfnamefont {J.}~\bibnamefont
  {Goold}}, \bibinfo {author} {\bibfnamefont {M.}~\bibnamefont {Huber}},
  \bibinfo {author} {\bibfnamefont {A.}~\bibnamefont {Riera}}, \bibinfo
  {author} {\bibfnamefont {L.}~\bibnamefont {del Rio}},\ and\ \bibinfo {author}
  {\bibfnamefont {P.}~\bibnamefont {Skrzypczyk}},\ }\bibfield  {title}
  {\bibinfo {title} {The role of quantum information in thermodynamics—a
  topical review},\ }\href
  {https://iopscience.iop.org/article/10.1088/1751-8113/49/14/143001/meta}
  {\bibfield  {journal} {\bibinfo  {journal} {J. Phys. A: Math. Theor.}\
  }\textbf {\bibinfo {volume} {49}},\ \bibinfo {pages} {143001} (\bibinfo
  {year} {2016})}\BibitemShut {NoStop}%
\bibitem [{\citenamefont {Ng}\ and\ \citenamefont
  {Woods}(2018)}]{ng2018resource}%
  \BibitemOpen
  \bibfield  {author} {\bibinfo {author} {\bibfnamefont {N.~H.~Y.}\
  \bibnamefont {Ng}}\ and\ \bibinfo {author} {\bibfnamefont {M.~P.}\
  \bibnamefont {Woods}},\ }\bibfield  {title} {\bibinfo {title} {Resource
  theory of quantum thermodynamics: Thermal operations and second laws},\ }in\
  \href {https://doi.org/10.1007/978-3-319-99046-0_26} {\emph {\bibinfo
  {booktitle} {Thermodynamics in the Quantum Regime}}}\ (\bibinfo  {publisher}
  {Springer},\ \bibinfo {year} {2018})\ pp.\ \bibinfo {pages}
  {625--650}\BibitemShut {NoStop}%
\bibitem [{\citenamefont {Auff{\'{e}}ves}(2022)}]{Auffeves2022energy}%
  \BibitemOpen
  \bibfield  {author} {\bibinfo {author} {\bibfnamefont {A.}~\bibnamefont
  {Auff{\'{e}}ves}},\ }\bibfield  {title} {\bibinfo {title} {Quantum
  {T}echnologies {N}eed a {Q}uantum {E}nergy {I}nitiative},\ }\href
  {https://doi.org/10.1103/PRXQuantum.3.020101} {\bibfield  {journal} {\bibinfo
   {journal} {PRX Quantum}\ }\textbf {\bibinfo {volume} {3}},\ \bibinfo {pages}
  {020101} (\bibinfo {year} {2022})}\BibitemShut {NoStop}%
\bibitem [{\citenamefont {Deffner}(2021)}]{deffner2021energetic}%
  \BibitemOpen
  \bibfield  {author} {\bibinfo {author} {\bibfnamefont {S.}~\bibnamefont
  {Deffner}},\ }\bibfield  {title} {\bibinfo {title} {Energetic cost of
  hamiltonian quantum gates},\ }\href
  {https://doi.org/10.1209/0295-5075/134/40002} {\bibfield  {journal} {\bibinfo
   {journal} {EPL}\ }\textbf {\bibinfo {volume} {134}},\ \bibinfo {pages}
  {40002} (\bibinfo {year} {2021})}\BibitemShut {NoStop}%
\bibitem [{\citenamefont {Aifer}\ and\ \citenamefont
  {Deffner}(2022)}]{aifer2022quantum}%
  \BibitemOpen
  \bibfield  {author} {\bibinfo {author} {\bibfnamefont {M.}~\bibnamefont
  {Aifer}}\ and\ \bibinfo {author} {\bibfnamefont {S.}~\bibnamefont
  {Deffner}},\ }\bibfield  {title} {\bibinfo {title} {From quantum speed limits
  to energy-efficient quantum gates},\ }\href
  {https://doi.org/10.1088/1367-2630/ac6821} {\bibfield  {journal} {\bibinfo
  {journal} {New J. Phys.}\ }\textbf {\bibinfo {volume} {24}},\ \bibinfo
  {pages} {055002} (\bibinfo {year} {2022})}\BibitemShut {NoStop}%
\bibitem [{\citenamefont {Buffoni}\ \emph {et~al.}(2022)\citenamefont
  {Buffoni}, \citenamefont {Gherardini}, \citenamefont {Cruzeiro},\ and\
  \citenamefont {Omar}}]{buffoni2022third}%
  \BibitemOpen
  \bibfield  {author} {\bibinfo {author} {\bibfnamefont {L.}~\bibnamefont
  {Buffoni}}, \bibinfo {author} {\bibfnamefont {S.}~\bibnamefont {Gherardini}},
  \bibinfo {author} {\bibfnamefont {E.~Z.}\ \bibnamefont {Cruzeiro}},\ and\
  \bibinfo {author} {\bibfnamefont {Y.}~\bibnamefont {Omar}},\ }\bibfield
  {title} {\bibinfo {title} {Third law of thermodynamics and the scaling of
  quantum computers},\ }\href {https://arxiv.org/abs/2203.09545} {\bibfield
  {journal} {\bibinfo  {journal} {arXiv:2203.09545}\ } (\bibinfo {year}
  {2022})}\BibitemShut {NoStop}%
\bibitem [{\citenamefont {Evans}\ \emph {et~al.}(1993)\citenamefont {Evans},
  \citenamefont {Cohen},\ and\ \citenamefont {Morriss}}]{evans1993probability}%
  \BibitemOpen
  \bibfield  {author} {\bibinfo {author} {\bibfnamefont {D.~J.}\ \bibnamefont
  {Evans}}, \bibinfo {author} {\bibfnamefont {E.~G.~D.}\ \bibnamefont
  {Cohen}},\ and\ \bibinfo {author} {\bibfnamefont {G.~P.}\ \bibnamefont
  {Morriss}},\ }\bibfield  {title} {\bibinfo {title} {Probability of second law
  violations in shearing steady states},\ }\href
  {https://doi.org/10.1103/PhysRevLett.71.2401} {\bibfield  {journal} {\bibinfo
   {journal} {Phys. Rev. Lett.}\ }\textbf {\bibinfo {volume} {71}},\ \bibinfo
  {pages} {2401} (\bibinfo {year} {1993})}\BibitemShut {NoStop}%
\bibitem [{\citenamefont {Jarzynski}(1997)}]{Jarzynski97}%
  \BibitemOpen
  \bibfield  {author} {\bibinfo {author} {\bibfnamefont {C.}~\bibnamefont
  {Jarzynski}},\ }\bibfield  {title} {\bibinfo {title} {Nonequilibrium equality
  for free energy differences},\ }\href
  {https://doi.org/10.1103/PhysRevLett.78.2690} {\bibfield  {journal} {\bibinfo
   {journal} {Phys. Rev. Lett.}\ }\textbf {\bibinfo {volume} {78}},\ \bibinfo
  {pages} {2690} (\bibinfo {year} {1997})}\BibitemShut {NoStop}%
\bibitem [{\citenamefont {Crooks}(1999)}]{crooks1999entropy}%
  \BibitemOpen
  \bibfield  {author} {\bibinfo {author} {\bibfnamefont {G.~E.}\ \bibnamefont
  {Crooks}},\ }\bibfield  {title} {\bibinfo {title} {Entropy production
  fluctuation theorem and the nonequilibrium work relation for free energy
  differences},\ }\href {https://doi.org/10.1103/PhysRevE.60.2721} {\bibfield
  {journal} {\bibinfo  {journal} {Phys. Rev. E}\ }\textbf {\bibinfo {volume}
  {60}},\ \bibinfo {pages} {2721} (\bibinfo {year} {1999})}\BibitemShut
  {NoStop}%
\bibitem [{\citenamefont {Hatano}\ and\ \citenamefont
  {Sasa}(2001)}]{hatano2001steady}%
  \BibitemOpen
  \bibfield  {author} {\bibinfo {author} {\bibfnamefont {T.}~\bibnamefont
  {Hatano}}\ and\ \bibinfo {author} {\bibfnamefont {S.-i.}\ \bibnamefont
  {Sasa}},\ }\bibfield  {title} {\bibinfo {title} {Steady-state
  {T}hermodynamics of {L}angevin {S}ystems},\ }\href
  {https://doi.org/10.1103/PhysRevLett.86.3463} {\bibfield  {journal} {\bibinfo
   {journal} {Phys. Rev. Lett.}\ }\textbf {\bibinfo {volume} {86}},\ \bibinfo
  {pages} {3463} (\bibinfo {year} {2001})}\BibitemShut {NoStop}%
\bibitem [{\citenamefont {De~Chiara}\ and\ \citenamefont
  {Imparato}(2022)}]{de2022quantum}%
  \BibitemOpen
  \bibfield  {author} {\bibinfo {author} {\bibfnamefont {G.}~\bibnamefont
  {De~Chiara}}\ and\ \bibinfo {author} {\bibfnamefont {A.}~\bibnamefont
  {Imparato}},\ }\bibfield  {title} {\bibinfo {title} {Quantum fluctuation
  theorem for dissipative processes},\ }\href
  {https://doi.org/10.1103/PhysRevResearch.4.023230} {\bibfield  {journal}
  {\bibinfo  {journal} {Phys. Rev. Research}\ }\textbf {\bibinfo {volume}
  {4}},\ \bibinfo {pages} {023230} (\bibinfo {year} {2022})}\BibitemShut
  {NoStop}%
\bibitem [{\citenamefont {de~Z{\'{a}}rate}(2011)}]{Ortiz2011}%
  \BibitemOpen
  \bibfield  {author} {\bibinfo {author} {\bibfnamefont {J.~M.~O.}\
  \bibnamefont {de~Z{\'{a}}rate}},\ }\bibfield  {title} {\bibinfo {title}
  {Interview with michael e. fisher},\ }\href
  {https://www.europhysicsnews.org/articles/epn/abs/2011/01/epn2011421p14/epn2011421p14.html}
  {\bibfield  {journal} {\bibinfo  {journal} {Europhysics News}\ }\textbf
  {\bibinfo {volume} {42}},\ \bibinfo {pages} {14} (\bibinfo {year}
  {2011})}\BibitemShut {NoStop}%
\bibitem [{\citenamefont {Jarzynski}(2013)}]{jarzynski2013equalities}%
  \BibitemOpen
  \bibfield  {author} {\bibinfo {author} {\bibfnamefont {C.}~\bibnamefont
  {Jarzynski}},\ }\bibfield  {title} {\bibinfo {title} {Equalities and
  inequalities: Irreversibility and the second law of thermodynamics at the
  nanoscale},\ }\href
  {https://doi.org/10.1146/annurev-conmatphys-062910-140506} {\bibfield
  {journal} {\bibinfo  {journal} {Annu. Rev. Condens. Matter Phys.}\ ,\
  \bibinfo {pages} {145}} (\bibinfo {year} {2013})}\BibitemShut {NoStop}%
\bibitem [{\citenamefont {Andrieux}\ and\ \citenamefont
  {Gaspard}(2008)}]{andrieux2008quantum}%
  \BibitemOpen
  \bibfield  {author} {\bibinfo {author} {\bibfnamefont {D.}~\bibnamefont
  {Andrieux}}\ and\ \bibinfo {author} {\bibfnamefont {P.}~\bibnamefont
  {Gaspard}},\ }\bibfield  {title} {\bibinfo {title} {Quantum work relations
  and response theory},\ }\href
  {https://doi.org/10.1103/PhysRevLett.100.230404} {\bibfield  {journal}
  {\bibinfo  {journal} {Phys. Rev. Lett.}\ }\textbf {\bibinfo {volume} {100}},\
  \bibinfo {pages} {230404} (\bibinfo {year} {2008})}\BibitemShut {NoStop}%
\bibitem [{\citenamefont {Andrieux}\ \emph {et~al.}(2009)\citenamefont
  {Andrieux}, \citenamefont {Gaspard}, \citenamefont {Monnai},\ and\
  \citenamefont {Tasaki}}]{andrieux2009fluctuation}%
  \BibitemOpen
  \bibfield  {author} {\bibinfo {author} {\bibfnamefont {D.}~\bibnamefont
  {Andrieux}}, \bibinfo {author} {\bibfnamefont {P.}~\bibnamefont {Gaspard}},
  \bibinfo {author} {\bibfnamefont {T.}~\bibnamefont {Monnai}},\ and\ \bibinfo
  {author} {\bibfnamefont {S.}~\bibnamefont {Tasaki}},\ }\bibfield  {title}
  {\bibinfo {title} {The fluctuation theorem for currents in open quantum
  systems},\ }\href {https://doi.org/10.1088/1367-2630/11/4/043014} {\bibfield
  {journal} {\bibinfo  {journal} {New J. Phys.}\ }\textbf {\bibinfo {volume}
  {11}},\ \bibinfo {pages} {043014} (\bibinfo {year} {2009})}\BibitemShut
  {NoStop}%
\bibitem [{\citenamefont {Sagawa}\ and\ \citenamefont {Ueda}(2010)}]{Sagawa10}%
  \BibitemOpen
  \bibfield  {author} {\bibinfo {author} {\bibfnamefont {T.}~\bibnamefont
  {Sagawa}}\ and\ \bibinfo {author} {\bibfnamefont {M.}~\bibnamefont {Ueda}},\
  }\bibfield  {title} {\bibinfo {title} {Generalized {J}arzynski equality under
  nonequilibrium feedback control},\ }\href
  {https://doi.org/10.1103/PhysRevLett.104.090602} {\bibfield  {journal}
  {\bibinfo  {journal} {Phys. Rev. Lett.}\ }\textbf {\bibinfo {volume} {104}},\
  \bibinfo {pages} {090602} (\bibinfo {year} {2010})}\BibitemShut {NoStop}%
\bibitem [{\citenamefont {Fujitani}\ and\ \citenamefont
  {Suzuki}(2010)}]{fujitani2010jarzynski}%
  \BibitemOpen
  \bibfield  {author} {\bibinfo {author} {\bibfnamefont {Y.}~\bibnamefont
  {Fujitani}}\ and\ \bibinfo {author} {\bibfnamefont {H.}~\bibnamefont
  {Suzuki}},\ }\bibfield  {title} {\bibinfo {title} {Jarzynski equality
  modified in the linear feedback system},\ }\href
  {https://doi.org/10.1143/JPSJ.79.104003} {\bibfield  {journal} {\bibinfo
  {journal} {J. Phys. Soc. Jpn.}\ }\textbf {\bibinfo {volume} {79}},\ \bibinfo
  {pages} {104003} (\bibinfo {year} {2010})}\BibitemShut {NoStop}%
\bibitem [{\citenamefont {Deffner}\ \emph {et~al.}(2016)\citenamefont
  {Deffner}, \citenamefont {Paz},\ and\ \citenamefont {Zurek}}]{Deffner16}%
  \BibitemOpen
  \bibfield  {author} {\bibinfo {author} {\bibfnamefont {S.}~\bibnamefont
  {Deffner}}, \bibinfo {author} {\bibfnamefont {J.~P.}\ \bibnamefont {Paz}},\
  and\ \bibinfo {author} {\bibfnamefont {W.~H.}\ \bibnamefont {Zurek}},\
  }\bibfield  {title} {\bibinfo {title} {Quantum work and the thermodynamic
  cost of quantum measurements},\ }\href
  {https://journals.aps.org/pre/abstract/10.1103/PhysRevE.94.010103} {\bibfield
   {journal} {\bibinfo  {journal} {Phys. Rev. E}\ }\textbf {\bibinfo {volume}
  {94}},\ \bibinfo {pages} {010103(R)} (\bibinfo {year} {2016})}\BibitemShut
  {NoStop}%
\bibitem [{\citenamefont {Sone}\ and\ \citenamefont {Deffner}(2021)}]{Sone21b}%
  \BibitemOpen
  \bibfield  {author} {\bibinfo {author} {\bibfnamefont {A.}~\bibnamefont
  {Sone}}\ and\ \bibinfo {author} {\bibfnamefont {S.}~\bibnamefont {Deffner}},\
  }\bibfield  {title} {\bibinfo {title} {{J}arzynski equality for stochastic
  conditional work},\ }\href {https://doi.org/10.1007/s10955-021-02720-6}
  {\bibfield  {journal} {\bibinfo  {journal} {J. Stat. Phys}\ }\textbf
  {\bibinfo {volume} {183}},\ \bibinfo {pages} {11} (\bibinfo {year}
  {2021})}\BibitemShut {NoStop}%
\bibitem [{\citenamefont {Gardas}\ and\ \citenamefont
  {Deffner}(2018)}]{gardas2018quantum}%
  \BibitemOpen
  \bibfield  {author} {\bibinfo {author} {\bibfnamefont {B.}~\bibnamefont
  {Gardas}}\ and\ \bibinfo {author} {\bibfnamefont {S.}~\bibnamefont
  {Deffner}},\ }\bibfield  {title} {\bibinfo {title} {Quantum fluctuation
  theorem for error diagnostics in quantum annealers},\ }\href
  {https://doi.org/10.1038/s41598-018-35264-z} {\bibfield  {journal} {\bibinfo
  {journal} {Sci. Rep.}\ }\textbf {\bibinfo {volume} {8}},\ \bibinfo {pages}
  {1} (\bibinfo {year} {2018})}\BibitemShut {NoStop}%
\bibitem [{\citenamefont {Kafri}\ and\ \citenamefont
  {Deffner}(2012)}]{Kafri12}%
  \BibitemOpen
  \bibfield  {author} {\bibinfo {author} {\bibfnamefont {D.}~\bibnamefont
  {Kafri}}\ and\ \bibinfo {author} {\bibfnamefont {S.}~\bibnamefont
  {Deffner}},\ }\bibfield  {title} {\bibinfo {title} {Holevo's bound from a
  general quantum fluctuation theorem},\ }\href
  {https://journals.aps.org/pra/abstract/10.1103/PhysRevA.86.044302} {\bibfield
   {journal} {\bibinfo  {journal} {Phys. Rev. A}\ }\textbf {\bibinfo {volume}
  {86}},\ \bibinfo {pages} {044302} (\bibinfo {year} {2012})}\BibitemShut
  {NoStop}%
\bibitem [{\citenamefont {Holevo}(1973)}]{holevo1973bounds}%
  \BibitemOpen
  \bibfield  {author} {\bibinfo {author} {\bibfnamefont {A.~S.}\ \bibnamefont
  {Holevo}},\ }\bibfield  {title} {\bibinfo {title} {Bounds for the quantity of
  information transmitted by a quantum communication channel},\ }\href
  {http://www.mathnet.ru/php/archive.phtml?wshow=paper&jrnid=ppi&paperid=903&option_lang=eng}
  {\bibfield  {journal} {\bibinfo  {journal} {Probl. Pereda. Inf.}\ }\textbf
  {\bibinfo {volume} {9}},\ \bibinfo {pages} {3} (\bibinfo {year}
  {1973})}\BibitemShut {NoStop}%
\bibitem [{\citenamefont {Holevo}(1998)}]{holevo1998capacity}%
  \BibitemOpen
  \bibfield  {author} {\bibinfo {author} {\bibfnamefont {A.~S.}\ \bibnamefont
  {Holevo}},\ }\bibfield  {title} {\bibinfo {title} {The capacity of the
  quantum channel with general signal states},\ }\href
  {https://doi.org/10.1109/18.651037} {\bibfield  {journal} {\bibinfo
  {journal} {IEEE Trans. Info. Theo.}\ }\textbf {\bibinfo {volume} {44}},\
  \bibinfo {pages} {269} (\bibinfo {year} {1998})}\BibitemShut {NoStop}%
\bibitem [{\citenamefont {Watrous}(2018)}]{WatrousBook18}%
  \BibitemOpen
  \bibfield  {author} {\bibinfo {author} {\bibfnamefont {J.}~\bibnamefont
  {Watrous}},\ }\href@noop {} {\emph {\bibinfo {title} {The Theory of Quantum
  Information}}}\ (\bibinfo  {publisher} {Cambridge University Press},\
  \bibinfo {year} {2018})\BibitemShut {NoStop}%
\bibitem [{\citenamefont {Tasaki}(2000)}]{Tasaki00}%
  \BibitemOpen
  \bibfield  {author} {\bibinfo {author} {\bibfnamefont {H.}~\bibnamefont
  {Tasaki}},\ }\bibfield  {title} {\bibinfo {title} {Jarzynski relations for
  quantum systems and some applications},\ }\href
  {https://arxiv.org/abs/cond-mat/0009244v2} {\bibfield  {journal} {\bibinfo
  {journal} {arXiv:cond-mat/0009244}\ } (\bibinfo {year} {2000})}\BibitemShut
  {NoStop}%
\bibitem [{\citenamefont {Kurchan}(2000)}]{Kurchan01}%
  \BibitemOpen
  \bibfield  {author} {\bibinfo {author} {\bibfnamefont {J.}~\bibnamefont
  {Kurchan}},\ }\bibfield  {title} {\bibinfo {title} {A quantum fluctuation
  theorem},\ }\href {https://arxiv.org/abs/cond-mat/0007360v2} {\bibfield
  {journal} {\bibinfo  {journal} {arXiv:cond-mat/0007360}\ } (\bibinfo {year}
  {2000})}\BibitemShut {NoStop}%
\bibitem [{\citenamefont {Smith}\ \emph {et~al.}(2018)\citenamefont {Smith},
  \citenamefont {Lu}, \citenamefont {An}, \citenamefont {Zhang}, \citenamefont
  {Zhang}, \citenamefont {Gong}, \citenamefont {Quan}, \citenamefont
  {Jarzynski},\ and\ \citenamefont {Kim}}]{Smith2018}%
  \BibitemOpen
  \bibfield  {author} {\bibinfo {author} {\bibfnamefont {A.}~\bibnamefont
  {Smith}}, \bibinfo {author} {\bibfnamefont {Y.}~\bibnamefont {Lu}}, \bibinfo
  {author} {\bibfnamefont {S.}~\bibnamefont {An}}, \bibinfo {author}
  {\bibfnamefont {X.}~\bibnamefont {Zhang}}, \bibinfo {author} {\bibfnamefont
  {J.-N.}\ \bibnamefont {Zhang}}, \bibinfo {author} {\bibfnamefont
  {Z.}~\bibnamefont {Gong}}, \bibinfo {author} {\bibfnamefont {H.~T.}\
  \bibnamefont {Quan}}, \bibinfo {author} {\bibfnamefont {C.}~\bibnamefont
  {Jarzynski}},\ and\ \bibinfo {author} {\bibfnamefont {K.}~\bibnamefont
  {Kim}},\ }\bibfield  {title} {\bibinfo {title} {Verification of the quantum
  nonequilibrium work relation in the presence of decoherence},\ }\href
  {https://iopscience.iop.org/article/10.1088/1367-2630/aa9cd6} {\bibfield
  {journal} {\bibinfo  {journal} {New J. Phys.}\ }\textbf {\bibinfo {volume}
  {20}},\ \bibinfo {pages} {013008} (\bibinfo {year} {2018})}\BibitemShut
  {NoStop}%
\bibitem [{\citenamefont {An}\ \emph {et~al.}(2015)\citenamefont {An},
  \citenamefont {Zhang}, \citenamefont {Um}, \citenamefont {Lv}, \citenamefont
  {Lu}, \citenamefont {Zhang}, \citenamefont {Yin}, \citenamefont {Quan},\ and\
  \citenamefont {Kim}}]{An15}%
  \BibitemOpen
  \bibfield  {author} {\bibinfo {author} {\bibfnamefont {S.}~\bibnamefont
  {An}}, \bibinfo {author} {\bibfnamefont {J.-N.}\ \bibnamefont {Zhang}},
  \bibinfo {author} {\bibfnamefont {M.}~\bibnamefont {Um}}, \bibinfo {author}
  {\bibfnamefont {D.}~\bibnamefont {Lv}}, \bibinfo {author} {\bibfnamefont
  {Y.}~\bibnamefont {Lu}}, \bibinfo {author} {\bibfnamefont {J.}~\bibnamefont
  {Zhang}}, \bibinfo {author} {\bibfnamefont {Z.-Q.}\ \bibnamefont {Yin}},
  \bibinfo {author} {\bibfnamefont {H.~T.}\ \bibnamefont {Quan}},\ and\
  \bibinfo {author} {\bibfnamefont {K.}~\bibnamefont {Kim}},\ }\bibfield
  {title} {\bibinfo {title} {Experimental test of the quantum {J}arzynski
  equality with a trapped-ion system},\ }\href
  {https://www.nature.com/articles/nphys3197} {\bibfield  {journal} {\bibinfo
  {journal} {Nat. Phys.}\ }\textbf {\bibinfo {volume} {11}},\ \bibinfo {pages}
  {193} (\bibinfo {year} {2015})}\BibitemShut {NoStop}%
\bibitem [{\citenamefont {Campisi}\ \emph {et~al.}(2011)\citenamefont
  {Campisi}, \citenamefont {H\"{a}nggi},\ and\ \citenamefont
  {Talkner}}]{Campisi11}%
  \BibitemOpen
  \bibfield  {author} {\bibinfo {author} {\bibfnamefont {M.}~\bibnamefont
  {Campisi}}, \bibinfo {author} {\bibfnamefont {P.}~\bibnamefont
  {H\"{a}nggi}},\ and\ \bibinfo {author} {\bibfnamefont {P.}~\bibnamefont
  {Talkner}},\ }\bibfield  {title} {\bibinfo {title} {Colloquium: Quantum
  fluctuation relations: Foundations and applications},\ }\href
  {https://journals.aps.org/rmp/abstract/10.1103/RevModPhys.83.771} {\bibfield
  {journal} {\bibinfo  {journal} {Rev. Mod. Phys.}\ }\textbf {\bibinfo {volume}
  {83}},\ \bibinfo {pages} {771} (\bibinfo {year} {2011})}\BibitemShut
  {NoStop}%
\bibitem [{\citenamefont {Aguilar}\ \emph {et~al.}(2021)\citenamefont
  {Aguilar}, \citenamefont {Silva}, \citenamefont {Guimar{\~a}es},
  \citenamefont {Piera}, \citenamefont {C{\'e}leri},\ and\ \citenamefont
  {Landi}}]{aguilar2021two}%
  \BibitemOpen
  \bibfield  {author} {\bibinfo {author} {\bibfnamefont {G.~H.}\ \bibnamefont
  {Aguilar}}, \bibinfo {author} {\bibfnamefont {T.~L.}\ \bibnamefont {Silva}},
  \bibinfo {author} {\bibfnamefont {T.~E.}\ \bibnamefont {Guimar{\~a}es}},
  \bibinfo {author} {\bibfnamefont {R.~S.}\ \bibnamefont {Piera}}, \bibinfo
  {author} {\bibfnamefont {L.~C.}\ \bibnamefont {C{\'e}leri}},\ and\ \bibinfo
  {author} {\bibfnamefont {G.~T.}\ \bibnamefont {Landi}},\ }\bibfield  {title}
  {\bibinfo {title} {Two-point measurement of entropy production from the
  outcomes of a single experiment with correlated photon pairs},\ }\href
  {https://arxiv.org/abs/2108.03289} {\bibfield  {journal} {\bibinfo  {journal}
  {arXiv:2108.03289}\ } (\bibinfo {year} {2021})}\BibitemShut {NoStop}%
\bibitem [{\citenamefont {Hern{\'a}ndez-G{\'o}mez}\ \emph
  {et~al.}(2020)\citenamefont {Hern{\'a}ndez-G{\'o}mez}, \citenamefont
  {Gherardini}, \citenamefont {Poggiali}, \citenamefont {Cataliotti},
  \citenamefont {Trombettoni}, \citenamefont {Cappellaro},\ and\ \citenamefont
  {Fabbri}}]{hernandez2020experimental}%
  \BibitemOpen
  \bibfield  {author} {\bibinfo {author} {\bibfnamefont {S.}~\bibnamefont
  {Hern{\'a}ndez-G{\'o}mez}}, \bibinfo {author} {\bibfnamefont
  {S.}~\bibnamefont {Gherardini}}, \bibinfo {author} {\bibfnamefont
  {F.}~\bibnamefont {Poggiali}}, \bibinfo {author} {\bibfnamefont {F.~S.}\
  \bibnamefont {Cataliotti}}, \bibinfo {author} {\bibfnamefont
  {A.}~\bibnamefont {Trombettoni}}, \bibinfo {author} {\bibfnamefont
  {P.}~\bibnamefont {Cappellaro}},\ and\ \bibinfo {author} {\bibfnamefont
  {N.}~\bibnamefont {Fabbri}},\ }\bibfield  {title} {\bibinfo {title}
  {Experimental test of exchange fluctuation relations in an open quantum
  system},\ }\href {https://doi.org/10.1103/PhysRevResearch.2.023327}
  {\bibfield  {journal} {\bibinfo  {journal} {Phys. Rev. Research}\ }\textbf
  {\bibinfo {volume} {2}},\ \bibinfo {pages} {023327} (\bibinfo {year}
  {2020})}\BibitemShut {NoStop}%
\bibitem [{\citenamefont {Hern{\'a}ndez-G{\'o}mez}\ \emph
  {et~al.}(2021)\citenamefont {Hern{\'a}ndez-G{\'o}mez}, \citenamefont
  {Staudenmaier}, \citenamefont {Campisi},\ and\ \citenamefont
  {Fabbri}}]{hernandez2021experimental}%
  \BibitemOpen
  \bibfield  {author} {\bibinfo {author} {\bibfnamefont {S.}~\bibnamefont
  {Hern{\'a}ndez-G{\'o}mez}}, \bibinfo {author} {\bibfnamefont
  {N.}~\bibnamefont {Staudenmaier}}, \bibinfo {author} {\bibfnamefont
  {M.}~\bibnamefont {Campisi}},\ and\ \bibinfo {author} {\bibfnamefont
  {N.}~\bibnamefont {Fabbri}},\ }\bibfield  {title} {\bibinfo {title}
  {Experimental test of fluctuation relations for driven open quantum systems
  with an nv center},\ }\href {https://doi.org/10.1088/1367-2630/abfc6a}
  {\bibfield  {journal} {\bibinfo  {journal} {New J. Phys.}\ }\textbf {\bibinfo
  {volume} {23}},\ \bibinfo {pages} {065004} (\bibinfo {year}
  {2021})}\BibitemShut {NoStop}%
\bibitem [{\citenamefont {Albash}\ \emph {et~al.}(2013)\citenamefont {Albash},
  \citenamefont {Lidar}, \citenamefont {Marvian},\ and\ \citenamefont
  {Zanardi}}]{albash2013fluctuation}%
  \BibitemOpen
  \bibfield  {author} {\bibinfo {author} {\bibfnamefont {T.}~\bibnamefont
  {Albash}}, \bibinfo {author} {\bibfnamefont {D.~A.}\ \bibnamefont {Lidar}},
  \bibinfo {author} {\bibfnamefont {M.}~\bibnamefont {Marvian}},\ and\ \bibinfo
  {author} {\bibfnamefont {P.}~\bibnamefont {Zanardi}},\ }\bibfield  {title}
  {\bibinfo {title} {Fluctuation theorems for quantum processes},\ }\href
  {https://doi.org/10.1103/PhysRevE.88.032146} {\bibfield  {journal} {\bibinfo
  {journal} {Phys. Rev. E}\ }\textbf {\bibinfo {volume} {88}},\ \bibinfo
  {pages} {032146} (\bibinfo {year} {2013})}\BibitemShut {NoStop}%
\bibitem [{\citenamefont {Rastegin}\ and\ \citenamefont
  {{\.Z}yczkowski}(2014)}]{rastegin2014jarzynski}%
  \BibitemOpen
  \bibfield  {author} {\bibinfo {author} {\bibfnamefont {A.~E.}\ \bibnamefont
  {Rastegin}}\ and\ \bibinfo {author} {\bibfnamefont {K.}~\bibnamefont
  {{\.Z}yczkowski}},\ }\bibfield  {title} {\bibinfo {title} {Jarzynski equality
  for quantum stochastic maps},\ }\href
  {https://doi.org/10.1103/PhysRevE.89.012127} {\bibfield  {journal} {\bibinfo
  {journal} {Phys. Rev. E}\ }\textbf {\bibinfo {volume} {89}},\ \bibinfo
  {pages} {012127} (\bibinfo {year} {2014})}\BibitemShut {NoStop}%
\bibitem [{\citenamefont {Rastegin}(2013)}]{rastegin2013non}%
  \BibitemOpen
  \bibfield  {author} {\bibinfo {author} {\bibfnamefont {A.~E.}\ \bibnamefont
  {Rastegin}},\ }\bibfield  {title} {\bibinfo {title} {Non-equilibrium
  equalities with unital quantum channels},\ }\href
  {https://doi.org/10.1088/1742-5468/2013/06/P06016} {\bibfield  {journal}
  {\bibinfo  {journal} {J. Stat. Mech. Theory Exp.}\ }\textbf {\bibinfo
  {volume} {2013}},\ \bibinfo {pages} {P06016} (\bibinfo {year}
  {2013})}\BibitemShut {NoStop}%
\bibitem [{\citenamefont {Jarzynski}\ \emph {et~al.}(2015)\citenamefont
  {Jarzynski}, \citenamefont {Quan},\ and\ \citenamefont
  {Rahav}}]{jarzynski2015quantum}%
  \BibitemOpen
  \bibfield  {author} {\bibinfo {author} {\bibfnamefont {C.}~\bibnamefont
  {Jarzynski}}, \bibinfo {author} {\bibfnamefont {H.}~\bibnamefont {Quan}},\
  and\ \bibinfo {author} {\bibfnamefont {S.}~\bibnamefont {Rahav}},\ }\bibfield
   {title} {\bibinfo {title} {Quantum-classical correspondence principle for
  work distributions},\ }\href {https://doi.org/10.1103/PhysRevX.5.031038}
  {\bibfield  {journal} {\bibinfo  {journal} {Phys. Rev. X}\ }\textbf {\bibinfo
  {volume} {5}},\ \bibinfo {pages} {031038} (\bibinfo {year}
  {2015})}\BibitemShut {NoStop}%
\bibitem [{\citenamefont {Perarnau-Llobet}\ \emph {et~al.}(2017)\citenamefont
  {Perarnau-Llobet}, \citenamefont {B{\"a}umer}, \citenamefont {Hovhannisyan},
  \citenamefont {Huber},\ and\ \citenamefont {Acin}}]{perarnau2017no}%
  \BibitemOpen
  \bibfield  {author} {\bibinfo {author} {\bibfnamefont {M.}~\bibnamefont
  {Perarnau-Llobet}}, \bibinfo {author} {\bibfnamefont {E.}~\bibnamefont
  {B{\"a}umer}}, \bibinfo {author} {\bibfnamefont {K.~V.}\ \bibnamefont
  {Hovhannisyan}}, \bibinfo {author} {\bibfnamefont {M.}~\bibnamefont
  {Huber}},\ and\ \bibinfo {author} {\bibfnamefont {A.}~\bibnamefont {Acin}},\
  }\bibfield  {title} {\bibinfo {title} {No-go theorem for the characterization
  of work fluctuations in coherent quantum systems},\ }\href
  {https://doi.org/10.1103/PhysRevLett.118.070601} {\bibfield  {journal}
  {\bibinfo  {journal} {Phys. Rev. Lett.}\ }\textbf {\bibinfo {volume} {118}},\
  \bibinfo {pages} {070601} (\bibinfo {year} {2017})}\BibitemShut {NoStop}%
\bibitem [{\citenamefont {Buscemi}\ and\ \citenamefont
  {Scarani}(2021)}]{buscemi2021fluctuation}%
  \BibitemOpen
  \bibfield  {author} {\bibinfo {author} {\bibfnamefont {F.}~\bibnamefont
  {Buscemi}}\ and\ \bibinfo {author} {\bibfnamefont {V.}~\bibnamefont
  {Scarani}},\ }\bibfield  {title} {\bibinfo {title} {Fluctuation theorems from
  bayesian retrodiction},\ }\href {https://doi.org/10.1103/PhysRevE.103.052111}
  {\bibfield  {journal} {\bibinfo  {journal} {Phys. Rev. E}\ }\textbf {\bibinfo
  {volume} {103}},\ \bibinfo {pages} {052111} (\bibinfo {year}
  {2021})}\BibitemShut {NoStop}%
\bibitem [{\citenamefont {Micadei}\ \emph {et~al.}(2020)\citenamefont
  {Micadei}, \citenamefont {Landi},\ and\ \citenamefont
  {Lutz}}]{micadei2020quantum}%
  \BibitemOpen
  \bibfield  {author} {\bibinfo {author} {\bibfnamefont {K.}~\bibnamefont
  {Micadei}}, \bibinfo {author} {\bibfnamefont {G.~T.}\ \bibnamefont {Landi}},\
  and\ \bibinfo {author} {\bibfnamefont {E.}~\bibnamefont {Lutz}},\ }\bibfield
  {title} {\bibinfo {title} {Quantum fluctuation theorems beyond two-point
  measurements},\ }\href {https://doi.org/10.1103/PhysRevLett.124.090602}
  {\bibfield  {journal} {\bibinfo  {journal} {Phys. Rev. Lett.}\ }\textbf
  {\bibinfo {volume} {124}},\ \bibinfo {pages} {090602} (\bibinfo {year}
  {2020})}\BibitemShut {NoStop}%
\bibitem [{\citenamefont {Levy}\ and\ \citenamefont
  {Lostaglio}(2020)}]{levy2020quasiprobability}%
  \BibitemOpen
  \bibfield  {author} {\bibinfo {author} {\bibfnamefont {A.}~\bibnamefont
  {Levy}}\ and\ \bibinfo {author} {\bibfnamefont {M.}~\bibnamefont
  {Lostaglio}},\ }\bibfield  {title} {\bibinfo {title} {Quasiprobability
  distribution for heat fluctuations in the quantum regime},\ }\href
  {https://doi.org/10.1103/PRXQuantum.1.010309} {\bibfield  {journal} {\bibinfo
   {journal} {PRX Quantum}\ }\textbf {\bibinfo {volume} {1}},\ \bibinfo {pages}
  {010309} (\bibinfo {year} {2020})}\BibitemShut {NoStop}%
\bibitem [{\citenamefont {Lostaglio}(2018)}]{lostaglio2018quantum}%
  \BibitemOpen
  \bibfield  {author} {\bibinfo {author} {\bibfnamefont {M.}~\bibnamefont
  {Lostaglio}},\ }\bibfield  {title} {\bibinfo {title} {Quantum fluctuation
  theorems, contextuality, and work quasiprobabilities},\ }\href
  {https://doi.org/10.1103/PhysRevLett.120.040602} {\bibfield  {journal}
  {\bibinfo  {journal} {Phys. Rev. Lett.}\ }\textbf {\bibinfo {volume} {120}},\
  \bibinfo {pages} {040602} (\bibinfo {year} {2018})}\BibitemShut {NoStop}%
\bibitem [{\citenamefont {Buscemi}\ \emph {et~al.}(2016)\citenamefont
  {Buscemi}, \citenamefont {Das},\ and\ \citenamefont
  {Wilde}}]{buscemi2016approximate}%
  \BibitemOpen
  \bibfield  {author} {\bibinfo {author} {\bibfnamefont {F.}~\bibnamefont
  {Buscemi}}, \bibinfo {author} {\bibfnamefont {S.}~\bibnamefont {Das}},\ and\
  \bibinfo {author} {\bibfnamefont {M.~M.}\ \bibnamefont {Wilde}},\ }\bibfield
  {title} {\bibinfo {title} {Approximate reversibility in the context of
  entropy gain, information gain, and complete positivity},\ }\href
  {https://doi.org/10.1103/PhysRevA.93.062314} {\bibfield  {journal} {\bibinfo
  {journal} {Phys. Rev. A}\ }\textbf {\bibinfo {volume} {93}},\ \bibinfo
  {pages} {062314} (\bibinfo {year} {2016})}\BibitemShut {NoStop}%
\bibitem [{\citenamefont {Das}\ \emph {et~al.}(2018)\citenamefont {Das},
  \citenamefont {Khatri}, \citenamefont {Siopsis},\ and\ \citenamefont
  {Wilde}}]{das2018fundamental}%
  \BibitemOpen
  \bibfield  {author} {\bibinfo {author} {\bibfnamefont {S.}~\bibnamefont
  {Das}}, \bibinfo {author} {\bibfnamefont {S.}~\bibnamefont {Khatri}},
  \bibinfo {author} {\bibfnamefont {G.}~\bibnamefont {Siopsis}},\ and\ \bibinfo
  {author} {\bibfnamefont {M.~M.}\ \bibnamefont {Wilde}},\ }\bibfield  {title}
  {\bibinfo {title} {Fundamental limits on quantum dynamics based on entropy
  change},\ }\href {https://doi.org/https://doi.org/10.1063/1.4997044}
  {\bibfield  {journal} {\bibinfo  {journal} {Journal of Mathematical Physics}\
  }\textbf {\bibinfo {volume} {59}},\ \bibinfo {pages} {012205} (\bibinfo
  {year} {2018})}\BibitemShut {NoStop}%
\bibitem [{\citenamefont {Shangnan}\ and\ \citenamefont
  {Wang}(2021)}]{zhou_and_wang_2021quantum}%
  \BibitemOpen
  \bibfield  {author} {\bibinfo {author} {\bibfnamefont {Z.}~\bibnamefont
  {Shangnan}}\ and\ \bibinfo {author} {\bibfnamefont {Y.}~\bibnamefont
  {Wang}},\ }\bibfield  {title} {\bibinfo {title} {Quantum {C}ross {E}ntropy
  and {M}aximum {L}ikelihood {P}rinciple},\ }\href
  {https://arxiv.org/pdf/2102.11887.pdf} {\bibfield  {journal} {\bibinfo
  {journal} {arXiv:2102.11887}\ } (\bibinfo {year} {2021})}\BibitemShut
  {NoStop}%
\bibitem [{\citenamefont {Shangnan}(2021)}]{zhou2021quantum}%
  \BibitemOpen
  \bibfield  {author} {\bibinfo {author} {\bibfnamefont {Z.}~\bibnamefont
  {Shangnan}},\ }\bibfield  {title} {\bibinfo {title} {Quantum data compression
  and quantum cross entropy},\ }\href {https://arxiv.org/abs/2106.13823}
  {\bibfield  {journal} {\bibinfo  {journal} {arXiv:2106.13823}\ } (\bibinfo
  {year} {2021})}\BibitemShut {NoStop}%
\bibitem [{\citenamefont {Romero}\ \emph {et~al.}(2017)\citenamefont {Romero},
  \citenamefont {Olson},\ and\ \citenamefont {Aspuru-Guzik}}]{Romero17}%
  \BibitemOpen
  \bibfield  {author} {\bibinfo {author} {\bibfnamefont {J.}~\bibnamefont
  {Romero}}, \bibinfo {author} {\bibfnamefont {J.~P.}\ \bibnamefont {Olson}},\
  and\ \bibinfo {author} {\bibfnamefont {A.}~\bibnamefont {Aspuru-Guzik}},\
  }\bibfield  {title} {\bibinfo {title} {{Quantum autoencoders for efficient
  compression of quantum data}},\ }\href
  {https://doi.org/10.1088/2058-9565/aa8072} {\bibfield  {journal} {\bibinfo
  {journal} {Quantum Sci. Technol.}\ }\textbf {\bibinfo {volume} {2}},\
  \bibinfo {pages} {045001} (\bibinfo {year} {2017})}\BibitemShut {NoStop}%
\bibitem [{\citenamefont {Wan}\ \emph {et~al.}(2017)\citenamefont {Wan},
  \citenamefont {Dahlsten}, \citenamefont {Kristj{\'a}nsson}, \citenamefont
  {Gardner},\ and\ \citenamefont {Kim}}]{wan2017quantum}%
  \BibitemOpen
  \bibfield  {author} {\bibinfo {author} {\bibfnamefont {K.~H.}\ \bibnamefont
  {Wan}}, \bibinfo {author} {\bibfnamefont {O.}~\bibnamefont {Dahlsten}},
  \bibinfo {author} {\bibfnamefont {H.}~\bibnamefont {Kristj{\'a}nsson}},
  \bibinfo {author} {\bibfnamefont {R.}~\bibnamefont {Gardner}},\ and\ \bibinfo
  {author} {\bibfnamefont {M.~S.}\ \bibnamefont {Kim}},\ }\bibfield  {title}
  {\bibinfo {title} {Quantum generalisation of feedforward neural networks},\
  }\href {https://doi.org/10.1038/s41534-017-0032-4} {\bibfield  {journal}
  {\bibinfo  {journal} {npj Quantum Inf.}\ }\textbf {\bibinfo {volume} {3}},\
  \bibinfo {pages} {1} (\bibinfo {year} {2017})}\BibitemShut {NoStop}%
\bibitem [{\citenamefont {Preskill}(2018)}]{Preskill18}%
  \BibitemOpen
  \bibfield  {author} {\bibinfo {author} {\bibfnamefont {J.}~\bibnamefont
  {Preskill}},\ }\bibfield  {title} {\bibinfo {title} {Quantum computing in the
  {NISQ} era and beyond},\ }\href {https://doi.org/10.22331/q-2018-08-06-79}
  {\bibfield  {journal} {\bibinfo  {journal} {Quantum}\ }\textbf {\bibinfo
  {volume} {2}},\ \bibinfo {pages} {79} (\bibinfo {year} {2018})}\BibitemShut
  {NoStop}%
\bibitem [{\citenamefont {McClean}\ \emph {et~al.}(2016)\citenamefont
  {McClean}, \citenamefont {Romero}, \citenamefont {Babbush},\ and\
  \citenamefont {Aspuru-Guzik}}]{mcclean2016theory}%
  \BibitemOpen
  \bibfield  {author} {\bibinfo {author} {\bibfnamefont {J.~R.}\ \bibnamefont
  {McClean}}, \bibinfo {author} {\bibfnamefont {J.}~\bibnamefont {Romero}},
  \bibinfo {author} {\bibfnamefont {R.}~\bibnamefont {Babbush}},\ and\ \bibinfo
  {author} {\bibfnamefont {A.}~\bibnamefont {Aspuru-Guzik}},\ }\bibfield
  {title} {\bibinfo {title} {The theory of variational hybrid quantum-classical
  algorithms},\ }\href {https://doi.org/10.1007/978-94-015-8330-5_4} {\bibfield
   {journal} {\bibinfo  {journal} {New J. Phys.}\ }\textbf {\bibinfo {volume}
  {18}},\ \bibinfo {pages} {023023} (\bibinfo {year} {2016})}\BibitemShut
  {NoStop}%
\bibitem [{\citenamefont {Jones}\ \emph {et~al.}(2019)\citenamefont {Jones},
  \citenamefont {Endo}, \citenamefont {McArdle}, \citenamefont {Yuan},\ and\
  \citenamefont {Benjamin}}]{jones2019variational}%
  \BibitemOpen
  \bibfield  {author} {\bibinfo {author} {\bibfnamefont {T.}~\bibnamefont
  {Jones}}, \bibinfo {author} {\bibfnamefont {S.}~\bibnamefont {Endo}},
  \bibinfo {author} {\bibfnamefont {S.}~\bibnamefont {McArdle}}, \bibinfo
  {author} {\bibfnamefont {X.}~\bibnamefont {Yuan}},\ and\ \bibinfo {author}
  {\bibfnamefont {S.~C.}\ \bibnamefont {Benjamin}},\ }\bibfield  {title}
  {\bibinfo {title} {Variational quantum algorithms for discovering hamiltonian
  spectra},\ }\href
  {https://journals.aps.org/pra/abstract/10.1103/PhysRevA.99.062304} {\bibfield
   {journal} {\bibinfo  {journal} {Phys. Rev. A}\ }\textbf {\bibinfo {volume}
  {99}},\ \bibinfo {pages} {062304} (\bibinfo {year} {2019})}\BibitemShut
  {NoStop}%
\bibitem [{\citenamefont {Nakanishi}\ \emph {et~al.}(2020)\citenamefont
  {Nakanishi}, \citenamefont {Fujii},\ and\ \citenamefont
  {Todo}}]{nakanishi2020sequential}%
  \BibitemOpen
  \bibfield  {author} {\bibinfo {author} {\bibfnamefont {K.~M.}\ \bibnamefont
  {Nakanishi}}, \bibinfo {author} {\bibfnamefont {K.}~\bibnamefont {Fujii}},\
  and\ \bibinfo {author} {\bibfnamefont {S.}~\bibnamefont {Todo}},\ }\bibfield
  {title} {\bibinfo {title} {Sequential minimal optimization for
  quantum-classical hybrid algorithms},\ }\href
  {https://doi.org/10.1103/PhysRevResearch.2.043158} {\bibfield  {journal}
  {\bibinfo  {journal} {Phys. Rev. Research}\ }\textbf {\bibinfo {volume}
  {2}},\ \bibinfo {pages} {043158} (\bibinfo {year} {2020})}\BibitemShut
  {NoStop}%
\bibitem [{\citenamefont {Cerezo}\ \emph
  {et~al.}(2021{\natexlab{a}})\citenamefont {Cerezo}, \citenamefont
  {Arrasmith}, \citenamefont {Babbush}, \citenamefont {Benjamin}, \citenamefont
  {Endo}, \citenamefont {Fujii}, \citenamefont {McClean}, \citenamefont
  {Mitarai}, \citenamefont {Yuan}, \citenamefont {Cincio},\ and\ \citenamefont
  {Coles}}]{cerezo2021variational}%
  \BibitemOpen
  \bibfield  {author} {\bibinfo {author} {\bibfnamefont {M.}~\bibnamefont
  {Cerezo}}, \bibinfo {author} {\bibfnamefont {A.}~\bibnamefont {Arrasmith}},
  \bibinfo {author} {\bibfnamefont {R.}~\bibnamefont {Babbush}}, \bibinfo
  {author} {\bibfnamefont {S.~C.}\ \bibnamefont {Benjamin}}, \bibinfo {author}
  {\bibfnamefont {S.}~\bibnamefont {Endo}}, \bibinfo {author} {\bibfnamefont
  {K.}~\bibnamefont {Fujii}}, \bibinfo {author} {\bibfnamefont {J.~R.}\
  \bibnamefont {McClean}}, \bibinfo {author} {\bibfnamefont {K.}~\bibnamefont
  {Mitarai}}, \bibinfo {author} {\bibfnamefont {X.}~\bibnamefont {Yuan}},
  \bibinfo {author} {\bibfnamefont {L.}~\bibnamefont {Cincio}},\ and\ \bibinfo
  {author} {\bibfnamefont {P.~J.}\ \bibnamefont {Coles}},\ }\bibfield  {title}
  {\bibinfo {title} {Variational quantum algorithms},\ }\href
  {https://doi.org/10.1038/s42254-021-00348-9} {\bibfield  {journal} {\bibinfo
  {journal} {Nat. Rev. Phys.}\ }\textbf {\bibinfo {volume} {3}},\ \bibinfo
  {pages} {625} (\bibinfo {year} {2021}{\natexlab{a}})}\BibitemShut {NoStop}%
\bibitem [{\citenamefont {Bharti}\ \emph {et~al.}(2022)\citenamefont {Bharti},
  \citenamefont {Cervera-Lierta}, \citenamefont {Kyaw}, \citenamefont {Haug},
  \citenamefont {Alperin-Lea}, \citenamefont {Anand}, \citenamefont {Degroote},
  \citenamefont {Heimonen}, \citenamefont {Kottmann}, \citenamefont {Menke}
  \emph {et~al.}}]{bharti2022noisy}%
  \BibitemOpen
  \bibfield  {author} {\bibinfo {author} {\bibfnamefont {K.}~\bibnamefont
  {Bharti}}, \bibinfo {author} {\bibfnamefont {A.}~\bibnamefont
  {Cervera-Lierta}}, \bibinfo {author} {\bibfnamefont {T.~H.}\ \bibnamefont
  {Kyaw}}, \bibinfo {author} {\bibfnamefont {T.}~\bibnamefont {Haug}}, \bibinfo
  {author} {\bibfnamefont {S.}~\bibnamefont {Alperin-Lea}}, \bibinfo {author}
  {\bibfnamefont {A.}~\bibnamefont {Anand}}, \bibinfo {author} {\bibfnamefont
  {M.}~\bibnamefont {Degroote}}, \bibinfo {author} {\bibfnamefont
  {H.}~\bibnamefont {Heimonen}}, \bibinfo {author} {\bibfnamefont {J.~S.}\
  \bibnamefont {Kottmann}}, \bibinfo {author} {\bibfnamefont {T.}~\bibnamefont
  {Menke}}, \emph {et~al.},\ }\bibfield  {title} {\bibinfo {title} {Noisy
  intermediate-scale quantum algorithms},\ }\href
  {https://doi.org/10.1103/RevModPhys.94.015004} {\bibfield  {journal}
  {\bibinfo  {journal} {Rev. Mod. Phys.}\ }\textbf {\bibinfo {volume} {94}},\
  \bibinfo {pages} {015004} (\bibinfo {year} {2022})}\BibitemShut {NoStop}%
\bibitem [{\citenamefont {Deffner}(2017)}]{deffner2017kibble}%
  \BibitemOpen
  \bibfield  {author} {\bibinfo {author} {\bibfnamefont {S.}~\bibnamefont
  {Deffner}},\ }\bibfield  {title} {\bibinfo {title} {Kibble-{Z}urek scaling of
  the irreversible entropy production},\ }\href
  {https://doi.org/10.1103/PhysRevE.96.052125} {\bibfield  {journal} {\bibinfo
  {journal} {Phys. Rev. E}\ }\textbf {\bibinfo {volume} {96}},\ \bibinfo
  {pages} {052125} (\bibinfo {year} {2017})}\BibitemShut {NoStop}%
\bibitem [{\citenamefont {Wilde}(2013)}]{wilde2013quantum}%
  \BibitemOpen
  \bibfield  {author} {\bibinfo {author} {\bibfnamefont {M.~M.}\ \bibnamefont
  {Wilde}},\ }\href@noop {} {\emph {\bibinfo {title} {Quantum information
  theory}}}\ (\bibinfo  {publisher} {Cambridge University Press},\ \bibinfo
  {year} {2013})\BibitemShut {NoStop}%
\bibitem [{Note1()}]{Note1}%
  \BibitemOpen
  \bibinfo {note} {{The OTM scheme has been utilized to explore work and heat
  in the open quantum system~\cite {Sone20a}, its classical
  correspondence~\cite {Sone21b}, heat exchange~\cite {sone2022heat}, and work
  as an external observable~\cite {Beyer2020}. Particularly, a second-law-like
  inequality involving the guessed heat introduced in Ref.~\cite {Sone20a} can
  be derived by using $\setbox \z@ \hbox {\mathsurround \z@ $\textstyle
  P$}\mathaccent "0365{P}(\sigma )$ (See Appendix.~\ref
  {app:2ndLawGuessedHeat}).}}\BibitemShut {Stop}%
\bibitem [{Note2()}]{Note2}%
  \BibitemOpen
  \bibinfo {note} {{Note that our main claims are the derivation of the general
  integrated fluctuation theorems, which hold for any states and quantum
  channels, and its potential of characterizing the quantum protocol with the
  quantum cross entropy. We leave the comparison between the tight bound
  derived in Refs.~\cite {das2018fundamental,buscemi2016approximate} and our
  bound derived in the OTM scheme as an open problem.}}\BibitemShut {Stop}%
\bibitem [{\citenamefont {Nielsen}\ and\ \citenamefont
  {Chuang}(2010)}]{Nielsen}%
  \BibitemOpen
  \bibfield  {author} {\bibinfo {author} {\bibfnamefont {M.~A.}\ \bibnamefont
  {Nielsen}}\ and\ \bibinfo {author} {\bibfnamefont {I.~L.}\ \bibnamefont
  {Chuang}},\ }\href@noop {} {\emph {\bibinfo {title} {Quantum Computation and
  Quantum Information: 10th Anniversary Edition}}},\ \bibinfo {edition} {10th}\
  ed.\ (\bibinfo  {publisher} {Cambridge University Press},\ \bibinfo {address}
  {New York, NY, USA},\ \bibinfo {year} {2010})\BibitemShut {NoStop}%
\bibitem [{\citenamefont {Kingma}\ and\ \citenamefont
  {Welling}(2019)}]{Kingma_Book_Autoencoder}%
  \BibitemOpen
  \bibfield  {author} {\bibinfo {author} {\bibfnamefont {D.~P.}\ \bibnamefont
  {Kingma}}\ and\ \bibinfo {author} {\bibfnamefont {M.}~\bibnamefont
  {Welling}},\ }\bibfield  {title} {\bibinfo {title} {{A}n {I}ntroduction to
  {V}ariational {A}utoencoders},\ }\href {https://doi.org/10.1561/2200000056}
  {\bibfield  {journal} {\bibinfo  {journal} {Found. Trends Mach. Learn.}\
  }\textbf {\bibinfo {volume} {12}},\ \bibinfo {pages} {307} (\bibinfo {year}
  {2019})}\BibitemShut {NoStop}%
\bibitem [{\citenamefont {Bravo-Prieto}(2021)}]{Bravo_Prieto_2021}%
  \BibitemOpen
  \bibfield  {author} {\bibinfo {author} {\bibfnamefont {C.}~\bibnamefont
  {Bravo-Prieto}},\ }\bibfield  {title} {\bibinfo {title} {Quantum autoencoders
  with enhanced data encoding},\ }\href
  {https://doi.org/10.1088/2632-2153/ac0616} {\bibfield  {journal} {\bibinfo
  {journal} {Mach. Learn.: Sci. Technol.}\ }\textbf {\bibinfo {volume} {2}},\
  \bibinfo {pages} {035028} (\bibinfo {year} {2021})}\BibitemShut {NoStop}%
\bibitem [{\citenamefont {Bondarenko}\ and\ \citenamefont
  {Feldmann}(2020)}]{Bondarenko2019}%
  \BibitemOpen
  \bibfield  {author} {\bibinfo {author} {\bibfnamefont {D.}~\bibnamefont
  {Bondarenko}}\ and\ \bibinfo {author} {\bibfnamefont {P.}~\bibnamefont
  {Feldmann}},\ }\bibfield  {title} {\bibinfo {title} {Quantum autoencoders to
  denoise quantum data},\ }\href
  {https://doi.org/10.1103/PhysRevLett.124.130502} {\bibfield  {journal}
  {\bibinfo  {journal} {Phys. Rev. Lett.}\ }\textbf {\bibinfo {volume} {124}},\
  \bibinfo {pages} {130502} (\bibinfo {year} {2020})}\BibitemShut {NoStop}%
\bibitem [{\citenamefont {Locher}\ \emph {et~al.}(2022)\citenamefont {Locher},
  \citenamefont {Cardarelli},\ and\ \citenamefont
  {M{\"u}ller}}]{locher2022quantum}%
  \BibitemOpen
  \bibfield  {author} {\bibinfo {author} {\bibfnamefont {D.~F.}\ \bibnamefont
  {Locher}}, \bibinfo {author} {\bibfnamefont {L.}~\bibnamefont {Cardarelli}},\
  and\ \bibinfo {author} {\bibfnamefont {M.}~\bibnamefont {M{\"u}ller}},\
  }\bibfield  {title} {\bibinfo {title} {Quantum error correction with quantum
  autoencoders},\ }\href {https://arxiv.org/abs/2202.00555v1} {\bibfield
  {journal} {\bibinfo  {journal} {arXiv:2202.00555}\ } (\bibinfo {year}
  {2022})}\BibitemShut {NoStop}%
\bibitem [{\citenamefont {Cao}\ and\ \citenamefont {Wang}(2021)}]{Cao_2021}%
  \BibitemOpen
  \bibfield  {author} {\bibinfo {author} {\bibfnamefont {C.}~\bibnamefont
  {Cao}}\ and\ \bibinfo {author} {\bibfnamefont {X.}~\bibnamefont {Wang}},\
  }\bibfield  {title} {\bibinfo {title} {Noise-assisted quantum autoencoder},\
  }\href {https://doi.org/10.1103/physrevapplied.15.054012} {\bibfield
  {journal} {\bibinfo  {journal} {Phys. Rev. Appl.}\ }\textbf {\bibinfo
  {volume} {15}},\ \bibinfo {pages} {054012} (\bibinfo {year}
  {2021})}\BibitemShut {NoStop}%
\bibitem [{\citenamefont {Steinbrecher}\ \emph {et~al.}(2019)\citenamefont
  {Steinbrecher}, \citenamefont {Olson}, \citenamefont {Englund},\ and\
  \citenamefont {Carolan}}]{steinbrecher2019quantum}%
  \BibitemOpen
  \bibfield  {author} {\bibinfo {author} {\bibfnamefont {G.~R.}\ \bibnamefont
  {Steinbrecher}}, \bibinfo {author} {\bibfnamefont {J.~P.}\ \bibnamefont
  {Olson}}, \bibinfo {author} {\bibfnamefont {D.}~\bibnamefont {Englund}},\
  and\ \bibinfo {author} {\bibfnamefont {J.}~\bibnamefont {Carolan}},\
  }\bibfield  {title} {\bibinfo {title} {Quantum optical neural networks},\
  }\href {https://doi.org/10.1038/s41534-019-0174-7} {\bibfield  {journal}
  {\bibinfo  {journal} {npj Quantum Inf.}\ }\textbf {\bibinfo {volume} {5}},\
  \bibinfo {pages} {1} (\bibinfo {year} {2019})}\BibitemShut {NoStop}%
\bibitem [{\citenamefont {Du}\ and\ \citenamefont
  {Tao}(2021)}]{du2021exploring}%
  \BibitemOpen
  \bibfield  {author} {\bibinfo {author} {\bibfnamefont {Y.}~\bibnamefont
  {Du}}\ and\ \bibinfo {author} {\bibfnamefont {D.}~\bibnamefont {Tao}},\
  }\bibfield  {title} {\bibinfo {title} {On exploring practical potentials of
  quantum auto-encoder with advantages},\ }\href
  {https://arxiv.org/abs/2106.15432} {\bibfield  {journal} {\bibinfo  {journal}
  {arXiv:2106.15432}\ } (\bibinfo {year} {2021})}\BibitemShut {NoStop}%
\bibitem [{\citenamefont {Pepper}\ \emph {et~al.}(2019)\citenamefont {Pepper},
  \citenamefont {Tischler},\ and\ \citenamefont
  {Pryde}}]{pepper2019experimental}%
  \BibitemOpen
  \bibfield  {author} {\bibinfo {author} {\bibfnamefont {A.}~\bibnamefont
  {Pepper}}, \bibinfo {author} {\bibfnamefont {N.}~\bibnamefont {Tischler}},\
  and\ \bibinfo {author} {\bibfnamefont {G.~J.}\ \bibnamefont {Pryde}},\
  }\bibfield  {title} {\bibinfo {title} {Experimental realization of a quantum
  autoencoder: The compression of qutrits via machine learning},\ }\href
  {https://doi.org/10.1103/PhysRevLett.122.060501} {\bibfield  {journal}
  {\bibinfo  {journal} {Phys. Rev. Lett.}\ }\textbf {\bibinfo {volume} {122}},\
  \bibinfo {pages} {060501} (\bibinfo {year} {2019})}\BibitemShut {NoStop}%
\bibitem [{\citenamefont {Mangini}\ \emph {et~al.}(2022)\citenamefont
  {Mangini}, \citenamefont {Marruzzo}, \citenamefont {Piantanida},
  \citenamefont {Gerace}, \citenamefont {Bajoni},\ and\ \citenamefont
  {Macchiavello}}]{mangini2022quantum}%
  \BibitemOpen
  \bibfield  {author} {\bibinfo {author} {\bibfnamefont {S.}~\bibnamefont
  {Mangini}}, \bibinfo {author} {\bibfnamefont {A.}~\bibnamefont {Marruzzo}},
  \bibinfo {author} {\bibfnamefont {M.}~\bibnamefont {Piantanida}}, \bibinfo
  {author} {\bibfnamefont {D.}~\bibnamefont {Gerace}}, \bibinfo {author}
  {\bibfnamefont {D.}~\bibnamefont {Bajoni}},\ and\ \bibinfo {author}
  {\bibfnamefont {C.}~\bibnamefont {Macchiavello}},\ }\bibfield  {title}
  {\bibinfo {title} {Quantum neural network autoencoder and classifier applied
  to an industrial case study},\ }\href {https://arxiv.org/abs/2205.04127}
  {\bibfield  {journal} {\bibinfo  {journal} {arXiv:2205.04127}\ } (\bibinfo
  {year} {2022})}\BibitemShut {NoStop}%
\bibitem [{\citenamefont {Patel}\ \emph {et~al.}(2022)\citenamefont {Patel},
  \citenamefont {Collis}, \citenamefont {Duong}, \citenamefont {Koch},
  \citenamefont {Cutugno}, \citenamefont {Wessing},\ and\ \citenamefont
  {Alsing}}]{patel2022information}%
  \BibitemOpen
  \bibfield  {author} {\bibinfo {author} {\bibfnamefont {S.}~\bibnamefont
  {Patel}}, \bibinfo {author} {\bibfnamefont {B.}~\bibnamefont {Collis}},
  \bibinfo {author} {\bibfnamefont {W.}~\bibnamefont {Duong}}, \bibinfo
  {author} {\bibfnamefont {D.}~\bibnamefont {Koch}}, \bibinfo {author}
  {\bibfnamefont {M.}~\bibnamefont {Cutugno}}, \bibinfo {author} {\bibfnamefont
  {L.}~\bibnamefont {Wessing}},\ and\ \bibinfo {author} {\bibfnamefont
  {P.}~\bibnamefont {Alsing}},\ }\bibfield  {title} {\bibinfo {title}
  {Information loss and run time from practical application of quantum data
  compression},\ }\href {https://arxiv.org/abs/2203.11332} {\bibfield
  {journal} {\bibinfo  {journal} {arXiv preprint arXiv:2203.11332}\ } (\bibinfo
  {year} {2022})}\BibitemShut {NoStop}%
\bibitem [{\citenamefont {Ngairangbam}\ \emph {et~al.}(2022)\citenamefont
  {Ngairangbam}, \citenamefont {Spannowsky},\ and\ \citenamefont
  {Takeuchi}}]{ngairangbam2022anomaly}%
  \BibitemOpen
  \bibfield  {author} {\bibinfo {author} {\bibfnamefont {V.~S.}\ \bibnamefont
  {Ngairangbam}}, \bibinfo {author} {\bibfnamefont {M.}~\bibnamefont
  {Spannowsky}},\ and\ \bibinfo {author} {\bibfnamefont {M.}~\bibnamefont
  {Takeuchi}},\ }\bibfield  {title} {\bibinfo {title} {Anomaly detection in
  high-energy physics using a quantum autoencoder},\ }\href
  {https://doi.org/10.1103/PhysRevD.105.095004} {\bibfield  {journal} {\bibinfo
   {journal} {Phys. Rev. D}\ }\textbf {\bibinfo {volume} {105}},\ \bibinfo
  {pages} {095004} (\bibinfo {year} {2022})}\BibitemShut {NoStop}%
\bibitem [{\citenamefont {Huang}\ \emph {et~al.}(2020)\citenamefont {Huang},
  \citenamefont {Ma}, \citenamefont {Yin}, \citenamefont {Tang}, \citenamefont
  {Dong}, \citenamefont {Chen}, \citenamefont {Xiang}, \citenamefont {Li},\
  and\ \citenamefont {Guo}}]{huang2020realization}%
  \BibitemOpen
  \bibfield  {author} {\bibinfo {author} {\bibfnamefont {C.-J.}\ \bibnamefont
  {Huang}}, \bibinfo {author} {\bibfnamefont {H.}~\bibnamefont {Ma}}, \bibinfo
  {author} {\bibfnamefont {Q.}~\bibnamefont {Yin}}, \bibinfo {author}
  {\bibfnamefont {J.-F.}\ \bibnamefont {Tang}}, \bibinfo {author}
  {\bibfnamefont {D.}~\bibnamefont {Dong}}, \bibinfo {author} {\bibfnamefont
  {C.}~\bibnamefont {Chen}}, \bibinfo {author} {\bibfnamefont {G.-Y.}\
  \bibnamefont {Xiang}}, \bibinfo {author} {\bibfnamefont {C.-F.}\ \bibnamefont
  {Li}},\ and\ \bibinfo {author} {\bibfnamefont {G.-C.}\ \bibnamefont {Guo}},\
  }\bibfield  {title} {\bibinfo {title} {Realization of a quantum autoencoder
  for lossless compression of quantum data},\ }\href
  {https://doi.org/10.1103/PhysRevA.102.032412} {\bibfield  {journal} {\bibinfo
   {journal} {Phys. Rev. A}\ }\textbf {\bibinfo {volume} {102}},\ \bibinfo
  {pages} {032412} (\bibinfo {year} {2020})}\BibitemShut {NoStop}%
\bibitem [{\citenamefont {Ma}\ \emph {et~al.}(2020)\citenamefont {Ma},
  \citenamefont {Huang}, \citenamefont {Chen}, \citenamefont {Dong},
  \citenamefont {Wang}, \citenamefont {Wu},\ and\ \citenamefont
  {Xiang}}]{ma2020compression}%
  \BibitemOpen
  \bibfield  {author} {\bibinfo {author} {\bibfnamefont {H.}~\bibnamefont
  {Ma}}, \bibinfo {author} {\bibfnamefont {C.-J.}\ \bibnamefont {Huang}},
  \bibinfo {author} {\bibfnamefont {C.}~\bibnamefont {Chen}}, \bibinfo {author}
  {\bibfnamefont {D.}~\bibnamefont {Dong}}, \bibinfo {author} {\bibfnamefont
  {Y.}~\bibnamefont {Wang}}, \bibinfo {author} {\bibfnamefont {R.-B.}\
  \bibnamefont {Wu}},\ and\ \bibinfo {author} {\bibfnamefont {G.-Y.}\
  \bibnamefont {Xiang}},\ }\bibfield  {title} {\bibinfo {title} {On compression
  rate of quantum autoencoders: Control design, numerical and experimental
  realization},\ }\href {https://arxiv.org/abs/2005.11149} {\bibfield
  {journal} {\bibinfo  {journal} {arXiv:2005.11149}\ } (\bibinfo {year}
  {2020})}\BibitemShut {NoStop}%
\bibitem [{\citenamefont {Buscemi}\ and\ \citenamefont
  {Horodecki}(2009)}]{buscemi2009towards}%
  \BibitemOpen
  \bibfield  {author} {\bibinfo {author} {\bibfnamefont {F.}~\bibnamefont
  {Buscemi}}\ and\ \bibinfo {author} {\bibfnamefont {M.}~\bibnamefont
  {Horodecki}},\ }\bibfield  {title} {\bibinfo {title} {Towards a unified
  approach to information-disturbance tradeoffs in quantum measurements},\
  }\href {https://doi.org/10.1142/S1230161209000037} {\bibfield  {journal}
  {\bibinfo  {journal} {Open Syst. Inf. Dyn.}\ }\textbf {\bibinfo {volume}
  {16}},\ \bibinfo {pages} {29} (\bibinfo {year} {2009})}\BibitemShut {NoStop}%
\bibitem [{Note3()}]{Note3}%
  \BibitemOpen
  \bibinfo {note} {{When $\Phi $ is a unitary operation, we have $L_{\protect
  \text {otm}}=0$ and $S(\Phi (|p_i\rangle \protect \!\langle p_i|))=0$.
  Because the entropic disturbance is invariant under the unitary operation
  $\Delta \chi =0$, we can say that the upper bound is tight for the unitary
  operation.}}\BibitemShut {Stop}%
\bibitem [{\citenamefont {Cerezo}\ \emph
  {et~al.}(2021{\natexlab{b}})\citenamefont {Cerezo}, \citenamefont {Sone},
  \citenamefont {Volkoff}, \citenamefont {Cincio},\ and\ \citenamefont
  {Coles}}]{cerezo2021barren}%
  \BibitemOpen
  \bibfield  {author} {\bibinfo {author} {\bibfnamefont {M.}~\bibnamefont
  {Cerezo}}, \bibinfo {author} {\bibfnamefont {A.}~\bibnamefont {Sone}},
  \bibinfo {author} {\bibfnamefont {T.}~\bibnamefont {Volkoff}}, \bibinfo
  {author} {\bibfnamefont {L.}~\bibnamefont {Cincio}},\ and\ \bibinfo {author}
  {\bibfnamefont {P.~J.}\ \bibnamefont {Coles}},\ }\bibfield  {title} {\bibinfo
  {title} {Cost-function-dependent barren plateaus in shallow quantum neural
  networks},\ }\href {https://doi.org/10.1038/s41467-021-21728-w} {\bibfield
  {journal} {\bibinfo  {journal} {Nat. Commun.}\ }\textbf {\bibinfo {volume}
  {12}},\ \bibinfo {pages} {1791} (\bibinfo {year}
  {2021}{\natexlab{b}})}\BibitemShut {NoStop}%
\bibitem [{\citenamefont {Schlosshauer}(2007)}]{schlosshauer2007decoherence}%
  \BibitemOpen
  \bibfield  {author} {\bibinfo {author} {\bibfnamefont {M.~A.}\ \bibnamefont
  {Schlosshauer}},\ }\href@noop {} {\emph {\bibinfo {title} {Decoherence: and
  the quantum-to-classical transition}}}\ (\bibinfo  {publisher} {Springer
  Science \& Business Media},\ \bibinfo {year} {2007})\BibitemShut {NoStop}%
\bibitem [{\citenamefont {Araki}\ and\ \citenamefont {Lieb}(1970)}]{Araki70}%
  \BibitemOpen
  \bibfield  {author} {\bibinfo {author} {\bibfnamefont {H.}~\bibnamefont
  {Araki}}\ and\ \bibinfo {author} {\bibfnamefont {E.~H.}\ \bibnamefont
  {Lieb}},\ }\bibfield  {title} {\bibinfo {title} {Entropy inequalities},\
  }\href {https://doi.org/10.1007/BF01646092} {\bibfield  {journal} {\bibinfo
  {journal} {Commun. Math. Phys.}\ }\textbf {\bibinfo {volume} {18}},\ \bibinfo
  {pages} {160} (\bibinfo {year} {1970})}\BibitemShut {NoStop}%
\bibitem [{\citenamefont {Wilde}\ \emph {et~al.}(2014)\citenamefont {Wilde},
  \citenamefont {Winter},\ and\ \citenamefont {Yang}}]{wilde2014strong}%
  \BibitemOpen
  \bibfield  {author} {\bibinfo {author} {\bibfnamefont {M.~M.}\ \bibnamefont
  {Wilde}}, \bibinfo {author} {\bibfnamefont {A.}~\bibnamefont {Winter}},\ and\
  \bibinfo {author} {\bibfnamefont {D.}~\bibnamefont {Yang}},\ }\bibfield
  {title} {\bibinfo {title} {Strong converse for the classical capacity of
  entanglement-breaking and hadamard channels via a sandwiched r{\'e}nyi
  relative entropy},\ }\href {https://doi.org/10.1007/s00220-014-2122-x}
  {\bibfield  {journal} {\bibinfo  {journal} {Commun. Math. Phys.}\ }\textbf
  {\bibinfo {volume} {331}},\ \bibinfo {pages} {593} (\bibinfo {year}
  {2014})}\BibitemShut {NoStop}%
\bibitem [{\citenamefont {Beigi}(2013)}]{beigi2013sandwiched}%
  \BibitemOpen
  \bibfield  {author} {\bibinfo {author} {\bibfnamefont {S.}~\bibnamefont
  {Beigi}},\ }\bibfield  {title} {\bibinfo {title} {Sandwiched r{\'e}nyi
  divergence satisfies data processing inequality},\ }\href
  {https://doi.org/10.1063/1.4838855} {\bibfield  {journal} {\bibinfo
  {journal} {J. Math. Phys.}\ }\textbf {\bibinfo {volume} {54}},\ \bibinfo
  {pages} {122202} (\bibinfo {year} {2013})}\BibitemShut {NoStop}%
\bibitem [{\citenamefont {Sagawa}(2020)}]{Sagawa20}%
  \BibitemOpen
  \bibfield  {author} {\bibinfo {author} {\bibfnamefont {T.}~\bibnamefont
  {Sagawa}},\ }\bibfield  {title} {\bibinfo {title} {Entropy, divergence, and
  majorization in classical and quantum thermodynamics},\ }\href
  {https://arxiv.org/abs/2007.09974} {\bibfield  {journal} {\bibinfo  {journal}
  {arXiv: 2007.09974}\ } (\bibinfo {year} {2020})}\BibitemShut {NoStop}%
\bibitem [{\citenamefont {Tomamichel}\ \emph {et~al.}(2010)\citenamefont
  {Tomamichel}, \citenamefont {Colbeck},\ and\ \citenamefont
  {Renner}}]{Tomamichel2010duality}%
  \BibitemOpen
  \bibfield  {author} {\bibinfo {author} {\bibfnamefont {M.}~\bibnamefont
  {Tomamichel}}, \bibinfo {author} {\bibfnamefont {R.}~\bibnamefont
  {Colbeck}},\ and\ \bibinfo {author} {\bibfnamefont {R.}~\bibnamefont
  {Renner}},\ }\bibfield  {title} {\bibinfo {title} {Duality between smooth
  min-and max-entropies},\ }\href
  {https://doi.org/https://doi.org/10.1109/TIT.2010.2054130} {\bibfield
  {journal} {\bibinfo  {journal} {IEEE Trans. Inf. Theory}\ }\textbf {\bibinfo
  {volume} {56}},\ \bibinfo {pages} {4674} (\bibinfo {year}
  {2010})}\BibitemShut {NoStop}%
\bibitem [{\citenamefont {Tomamichel}(2015)}]{Tomamichel2015quantum}%
  \BibitemOpen
  \bibfield  {author} {\bibinfo {author} {\bibfnamefont {M.}~\bibnamefont
  {Tomamichel}},\ }\href
  {https://doi.org/https://doi.org/10.1007/978-3-319-21891-5} {\emph {\bibinfo
  {title} {Quantum Information Processing with Finite Resources: Mathematical
  Foundations}}},\ Vol.~\bibinfo {volume} {5}\ (\bibinfo  {publisher}
  {Springer},\ \bibinfo {year} {2015})\BibitemShut {NoStop}%
\bibitem [{\citenamefont {Cappellini}\ \emph {et~al.}(2007)\citenamefont
  {Cappellini}, \citenamefont {Sommers},\ and\ \citenamefont
  {{\.Z}yczkowski}}]{Cappellini2007subnormalized}%
  \BibitemOpen
  \bibfield  {author} {\bibinfo {author} {\bibfnamefont {V.}~\bibnamefont
  {Cappellini}}, \bibinfo {author} {\bibfnamefont {H.-J.}\ \bibnamefont
  {Sommers}},\ and\ \bibinfo {author} {\bibfnamefont {K.}~\bibnamefont
  {{\.Z}yczkowski}},\ }\bibfield  {title} {\bibinfo {title} {Subnormalized
  states and trace-nonincreasing maps},\ }\href
  {https://aip.scitation.org/doi/10.1063/1.2738359} {\bibfield  {journal}
  {\bibinfo  {journal} {J. Math. Phys.}\ }\textbf {\bibinfo {volume} {48}},\
  \bibinfo {pages} {052110} (\bibinfo {year} {2007})}\BibitemShut {NoStop}%
\bibitem [{\citenamefont {Cerezo}\ \emph {et~al.}(2020)\citenamefont {Cerezo},
  \citenamefont {Poremba}, \citenamefont {Cincio},\ and\ \citenamefont
  {Coles}}]{Cerezo19Fidelity}%
  \BibitemOpen
  \bibfield  {author} {\bibinfo {author} {\bibfnamefont {M.}~\bibnamefont
  {Cerezo}}, \bibinfo {author} {\bibfnamefont {A.}~\bibnamefont {Poremba}},
  \bibinfo {author} {\bibfnamefont {L.}~\bibnamefont {Cincio}},\ and\ \bibinfo
  {author} {\bibfnamefont {P.~J.}\ \bibnamefont {Coles}},\ }\bibfield  {title}
  {\bibinfo {title} {Variational quantum fidelity estimation},\ }\href
  {https://quantum-journal.org/papers/q-2020-03-26-248/} {\bibfield  {journal}
  {\bibinfo  {journal} {Quantum}\ }\textbf {\bibinfo {volume} {4}},\ \bibinfo
  {pages} {248} (\bibinfo {year} {2020})}\BibitemShut {NoStop}%
\bibitem [{\citenamefont {Cappellaro}(2012)}]{cappellaro201222}%
  \BibitemOpen
  \bibfield  {author} {\bibinfo {author} {\bibfnamefont {P.}~\bibnamefont
  {Cappellaro}},\ }\href
  {https://ocw.mit.edu/courses/22-51-quantum-theory-of-radiation-interactions-fall-2012/pages/lecture-notes/}
  {\bibinfo {title} {22.51 {Q}uantum {T}heory of {R}adiation {I}nteractions
  {F}all 2012”, {M}assachusetts {I}nstitute of {T}echnology: {MIT}
  {O}pen{C}ourse{W}are}} (\bibinfo {year} {2012})\BibitemShut {NoStop}%
\bibitem [{\citenamefont {Sone}\ \emph {et~al.}(2020)\citenamefont {Sone},
  \citenamefont {Liu},\ and\ \citenamefont {Cappellaro}}]{Sone20a}%
  \BibitemOpen
  \bibfield  {author} {\bibinfo {author} {\bibfnamefont {A.}~\bibnamefont
  {Sone}}, \bibinfo {author} {\bibfnamefont {Y.-X.}\ \bibnamefont {Liu}},\ and\
  \bibinfo {author} {\bibfnamefont {P.}~\bibnamefont {Cappellaro}},\ }\bibfield
   {title} {\bibinfo {title} {Quantum {J}arzynski equality in open quantum
  systems from the one-time measurement scheme},\ }\href
  {https://doi.org/10.1103/PhysRevLett.125.060602} {\bibfield  {journal}
  {\bibinfo  {journal} {Phys. Rev. Lett.}\ }\textbf {\bibinfo {volume} {125}},\
  \bibinfo {pages} {060602} (\bibinfo {year} {2020})}\BibitemShut {NoStop}%
\bibitem [{\citenamefont {Sone}\ \emph {et~al.}(2022)\citenamefont {Sone},
  \citenamefont {Soares-Pinto},\ and\ \citenamefont {Deffner}}]{sone2022heat}%
  \BibitemOpen
  \bibfield  {author} {\bibinfo {author} {\bibfnamefont {A.}~\bibnamefont
  {Sone}}, \bibinfo {author} {\bibfnamefont {D.~O.}\ \bibnamefont
  {Soares-Pinto}},\ and\ \bibinfo {author} {\bibfnamefont {S.}~\bibnamefont
  {Deffner}},\ }\bibfield  {title} {\bibinfo {title} {Exchange fluctuation
  theorems for strongly interacting quantum pumps},\ }\href
  {https://arxiv.org/abs/2209.12927} {\bibfield  {journal} {\bibinfo  {journal}
  {arXiv preprint arXiv:2209.12927}\ } (\bibinfo {year} {2022})}\BibitemShut
  {NoStop}%
\bibitem [{\citenamefont {Beyer}\ \emph {et~al.}(2020)\citenamefont {Beyer},
  \citenamefont {Luoma},\ and\ \citenamefont {Strunz}}]{Beyer2020}%
  \BibitemOpen
  \bibfield  {author} {\bibinfo {author} {\bibfnamefont {K.}~\bibnamefont
  {Beyer}}, \bibinfo {author} {\bibfnamefont {K.}~\bibnamefont {Luoma}},\ and\
  \bibinfo {author} {\bibfnamefont {W.~T.}\ \bibnamefont {Strunz}},\ }\bibfield
   {title} {\bibinfo {title} {Work as an external quantum observable and an
  operational quantum work fluctuation theorem},\ }\href
  {https://link.aps.org/doi/10.1103/PhysRevResearch.2.033508} {\bibfield
  {journal} {\bibinfo  {journal} {Phys. Rev. Research}\ }\textbf {\bibinfo
  {volume} {2}},\ \bibinfo {pages} {033508} (\bibinfo {year}
  {2020})}\BibitemShut {NoStop}%
\end{thebibliography}%

\end{document}